\newcommand{\remove}[1]{}
\newcommand{\changed}[1]{{\textcolor{purple} #1}}
\newtheorem{theorem}{Theorem}[section]
\newtheorem*{theorem*}{Theorem}
\newtheorem{lemma}[theorem]{Lemma}
\newtheorem{proposition}[theorem]{Proposition}
\newtheorem{corollary}[theorem]{Corollary}
\crefname{result}{Result}{Results}
\newtheorem{fact}[theorem]{Fact}
\crefname{fact}{Fact}{Facts}
\theoremstyle{definition}
\newtheorem{definition}[theorem]{Definition}
\newtheorem{remark}[theorem]{Remark}
\newtheorem{problem}[theorem]{Problem}
\newcommand{\ms}[1]{\ensuremath{\mathsf{#1}}}
\let\Pr\relax
\DeclareMathOperator{\Pr}{Pr}
\DeclareMathOperator*{\E}{\mathbb{E}}
\DeclareMathOperator*{\Ex}{\E}
\DeclareMathOperator{\tr}{tr}
\newcommand{\trdist}[1]{\norm{#1}_{\tr}}
\DeclareMathOperator{\poly}{poly}
\newcommand{\ketbra}[2]{\ket{#1}\!\bra{#2}}
\newcommand{\e}{\varepsilon}
\newcommand{\eps}{\varepsilon}
\newcommand{\class}[1]{\ensuremath{\mathsf{#1}}\xspace}
\mathchardef\mhyphen="2D 
\newcommand{\NP}{\class{NP}}
\newcommand{\UP}{\class{UP}}
\newcommand{\AM}{\class{AM}}
\newcommand{\coAM}{\class{coAM}}
\newcommand{\QMA}{\class{QMA}}
\newcommand{\calD}{\mathcal{D}}
\newcommand{\calL}{\mathcal{L}}
\newcommand{\calM}{\mathcal{M}}
\newcommand{\calP}{\mathcal{P}}
\newcommand{\calR}{\mathcal{R}}
\newcommand{\calV}{\mathcal{V}}
\newcommand{\xv}{\mathbf{x}}
\newcommand{\wv}{\mathbf{w}}
\DeclarePairedDelimiter\absd{\lvert}{\rvert}
\DeclarePairedDelimiter\normd{\lVert}{\rVert}
\DeclarePairedDelimiter\floord{\lfloor}{\rfloor}
\DeclarePairedDelimiter\ceild{\lceil}{\rceil}
\newcommand{\abs}[1]{\absd*{#1}}
\newcommand{\norm}[1]{\normd*{#1}}
\newcommand{\floor}[1]{\floord*{#1}}
\newcommand{\ceil}[1]{\ceild*{#1}}
\newcommand{\etal}{et al.\xspace}
\newcommand{\deq}{\coloneq}
\newcommand{\getsr}{%
  \mathrel{\vbox{\offinterlineskip\ialign{%
    \hfil##\hfil\cr
    $\scriptstyle\normalfont\textsc{r}$\cr
    \noalign{\kern0.02ex}
    $\leftarrow$\cr
}}}}
\newcommand{\concat}{\circ}
\renewcommand{\set}[1]{\left\{#1\right\}}
\newcommand{\zo}{\set{0,1}}
\newcommand{\XOR}{\textsc{xor}}
\newcommand{\QSZK}{\ms{QSZK}}
\newcommand{\hvQSZK}{\ms{hvQSZK}}
\newcommand{\SZK}{\ms{SZK}}
\newcommand{\NISZK}{\ms{NISZK}}
\newcommand{\QSWI}{\ms{QSWI}}
\newcommand{\SWI}{\ms{SWI}}
\newcommand{\hvSWI}{\ms{hvSWI}}
\newcommand{\hvQSWI}{\ms{hvQSWI}}
\newcommand{\pubQSWI}{\ms{pubQSWI}}
\newcommand{\seps}{\eps_{\ms{S}}}
\newcommand{\ceps}{\eps_{\ms{C}}}
\newcommand{\wieps}{\eps_{\ms{WI}}}
\newcommand{\R}{\calR}
\newcommand{\N}{\mathbb{N}}
\newcommand{\batch}[2]{{#1}^{\otimes #2}}
\newcommand{\LL}{\ms{L}}
\newcommand{\View}{\ms{View}}
\newcommand{\negl}{\ms{negl}}
\newcommand{\newprotocol}[5]{
  \begin{minipage}{\linewidth}
    \medskip
    \begin{framed}
    \small
    {\bf  #1}\\
    {\bf Inputs:} #2 
    \medskip
    \begin{enumerate}
        #3 
    \end{enumerate}
  \end{framed}
  \vspace{-1.5em}
  \captionof{figure}{#4.} 
  \label{#5}
  \medskip
  \end{minipage}
}
\newcommand{\figbox}[4]{
  \begin{minipage}{\linewidth}
    \medskip
    \begin{framed}
    \small
    {\bf  #1}
    \medskip
    \\
    #2 
  \end{framed}
  \vspace{-1.5em}
  \captionof{figure}{#3.} 
  \label{#4}
  \medskip
  \end{minipage}
}
\newcommand{\reg}[1]{\mathsf{#1}}
\title{Quantum Statistical Witness Indistinguishability}
\author{
    Shafik Nassar\\\small{\textsl{The University of Texas at Austin}}\\\small{\texttt{\href{mailto:shafik@cs.utexas.edu}{shafik@cs.utexas.edu}}}
    \and 
    Ronak Ramachandran\\\small{\textsl{The University of Texas at Austin}}\\\small{\texttt{\href{mailto:ronakr@utexas.edu}{ronakr@utexas.edu}}}
}
\date{September 2025}
\begin{document}

\maketitle

\begin{abstract}
    Statistical witness indistinguishability is a relaxation of statistical zero-knowledge which guarantees that
    the transcript of an interactive proof reveals no information about \emph{which} valid witness the prover used to generate it.
    In this paper we define and initiate the study of $\mathsf{QSWI}$, the class of problems with \emph{quantum} statistically witness indistinguishable proofs. 
    
    Using inherently quantum techniques from Kobayashi (TCC 2008), we prove that any problem with an \emph{honest-verifier} quantum statistically witness indistinguishable proof has a $3$-message public-coin \emph{malicious-verifier} quantum statistically witness indistinguishable proof.
    There is no known analogue of this result for \emph{classical} statistical witness indistinguishability.
    As a corollary, our result implies $\mathsf{SWI} \subseteq \mathsf{QSWI}$.
    
    Additionally, we extend the work of Bitansky et al. (STOC 2023) to show that \emph{quantum} batch proofs imply \emph{quantum} statistically witness indistinguishable proofs with inverse-polynomial witness indistinguishability error.
\end{abstract}

\section{Introduction}
\emph{Witness indistinguishability} (WI) \cite{FS90} is
a property of interactive proofs \cite{GMR85}
that is a
natural relaxation of
\emph{zero-knowledge}.
A WI proof for an 
$\NP$ relation $\calR$
is a proof in which the verifier does not learn which valid witness the honest prover used in the interaction. That is, if $(x,w),(x,w') \in \calR$, then the view of the interaction transcript when the prover uses $w$ is indistinguishable from the transcript when the prover uses $w'$.\footnote{Observe that for this definition to make sense, the prover has to be computationally bounded when given a witness, otherwise it can just find a canonical witness regardless of the witness it is given.}

Recently, the complexity class $\SWI$ was introduced by 
Bitansky~\etal~\cite{BKPRV24} in the context of studying \emph{batch proofs for $\NP$}\footnote{We explain the notion of a batch proof later on.}.
This class consists of all problems in $\NP$ that have a \emph{statistical} WI proof.
This class can be viewed as a relaxation of the more famous class, $\SZK$ \cite{Vad99}, consisting of all problems that have a statistical zero-knowledge proof.\footnote{Note, however, that $\SZK$ is not contained in $\SWI$ due to the technical fact that $\SWI$ is only defined for $\NP$ relations, whereas $\SZK$ does not have any such inherent constraint. Despite that, a result by Nguyen and Vadhan \cite{NV06} implies that $\NP \cap \SZK \in \SWI$.}

Bitansky~\etal conjecture that $\NP \not\subseteq\SWI$.
While this seems like a plausible conjecture, there is no evidence to support it.
In fact, almost nothing is known about the complexity class $\SWI$.
This is in stark contrast to the success story of studying $\SZK$:
the class $\SZK$ has a complete problem, is closed under complement, is robust to whether the verifier is malicious or honest, and is also robust to whether the verifier is allowed to keep secrets or not (private-coin vs. public-coin).
This allows us to derive meaningful conditional results about $\SZK$.
For example, we have evidence that $\NP \not\subseteq\SZK$: 
we know that $\SZK$ has public-coin, constant-round proofs \cite{For87} and that it is closed under complement \cite{Oka96}.
In other words, $\SZK\subseteq \AM \cap \coAM$.
Therefore, $\NP \not\subseteq\SZK$ unless $\NP \subseteq\coAM$
and the polynomial hierarchy collapses \cite{BHZ87}.
Unfortunately, none of those things are known for $\SWI$.

In this work, we show that the state of affairs is very different in a \emph{quantum world}.
We present the complexity class $\QSWI$, which consists of languages that have a statistical WI \emph{quantum} interactive proof.
Similar to $\QSZK$ \cite{Wat02}, we study a model in which both the prover and verifier are quantum machines, and they are allowed to exchange quantum information.
Statistical indistinguishability, in this case, is defined in terms of trace distance between quantum states.
To the best of our knowledge, this is the first time quantum statistical WI proofs have been considered in the literature.

\subsection{Our Contributions}
\paragraph{Fundamental results about $\QSWI$.}
We start by formally defining $\QSWI$ and its variants, such as $\hvQSWI$ where the statistical WI property is guaranteed only against honest verifiers, and $\pubQSWI$ in which the verifier only sends random (classical) bits.
By definition: $\pubQSWI \subseteq \QSWI \subseteq \hvQSWI$.
We show that in fact:
\begin{theorem}\label{thm:main1}
    Any problem in $\hvQSWI$ has a $3$-message, public-coin quantum interactive proof that satisfies quantum statistical witness indistinguishability against malicious verifiers.
    In particular: $\pubQSWI = \QSWI = \hvQSWI$.
    Moreover, the witness indistinguishability error in the resulting protocol is polynomially related to that in the original protocol.
\end{theorem}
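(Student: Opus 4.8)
The plan is to adapt the transformation of Kobayashi underlying the identity $\hvQSZK = \QSZK$ (in its constant-round public-coin form), replacing the witness-free honest-verifier simulator by the honest prover itself, run with a chosen witness. Since $\pubQSWI \subseteq \QSWI \subseteq \hvQSWI$ holds by definition, it suffices to show that any $\hvQSWI$ protocol $\Pi$ for a relation $\calR$ can be compiled into a $3$-message, public-coin protocol $\Pi'$ that is statistically witness indistinguishable against malicious verifiers, with at most a polynomial blow-up in the WI error.

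First I would fix the right normal form. As in Watrous's definition of $\hvQSZK$, an $\hvQSWI$ protocol should come with a round-by-round guarantee: for every $j$ and all $(x,w),(x,w')\in\calR$, the reduced state $\rho_j(w)$ of the honest verifier's private and message registers after the $j$-th message, when the honest prover uses $w$, satisfies $\trdist{\rho_j(w)-\rho_j(w')}\le\delta$. For each $x$ and each valid $w$, the interaction $(P(x,w),V)$ is then an honest-verifier quantum statistical zero-knowledge interaction whose simulator $\ms{sim}_w$ — ``run $P(x,w)$ against $V$, output the round-$j$ verifier state'' — is \emph{perfect} and efficient given $w$. Applying Kobayashi's transformation to $\Pi$, I would rely on two structural features: (i) the compiled protocol $\Pi'$ is $3$-message and public-coin, and its verifier depends only on $V$ (its consistency checks reference only $V$'s unitaries and final measurement), so $\Pi'$ is a single, well-defined protocol serving all witnesses at once; and (ii) both the new honest prover and the malicious-verifier simulator $S^{V^*}$ produced by the transformation are \emph{black-box} in the honest-verifier simulator — the honest prover generates its messages by a fixed quantum procedure invoking $\ms{sim}_w$, and $S^{V^*}$ outputs a state $\negl$-close to $\View^{\Pi'}_{V^*}(x,w)$ using only $q=\poly(n)$ calls to $\ms{sim}_w$. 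Completeness and soundness of $\Pi'$ are inherited directly from Kobayashi's analysis (soundness, in particular, depends only on the soundness of $\Pi$ and never on the simulator), and the new honest prover is efficient given $w$ because $P(x,w)$ is.

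Witness indistinguishability of $\Pi'$ I would then obtain from the bridge
\[
  \View^{\Pi'}_{V^*}(x,w) \;\approx\; S^{V^*}[\ms{sim}_w] \;\approx\; S^{V^*}[\ms{sim}_{w'}] \;\approx\; \View^{\Pi'}_{V^*}(x,w').
\]
The outer two approximations are, up to $\negl$, exactly Kobayashi's zero-knowledge guarantee for $\Pi'$, instantiated with the perfect simulators $\ms{sim}_w$ and $\ms{sim}_{w'}$. For the middle one I would run a hybrid over the $q$ oracle calls of $S^{V^*}$: hybrid $H_i$ answers the first $i$ calls using $\ms{sim}_{w'}$ and the remaining ones using $\ms{sim}_w$. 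Consecutive hybrids $H_{i-1}$ and $H_i$ feed $S^{V^*}$ identical inputs through the first $i-1$ calls, hence query the same round index $j$ on the $i$-th call, and then differ only by replacing $\rho_j(w)$ with the $\delta$-close $\rho_j(w')$; since the rest of $S^{V^*}$ (all later calls are identical in the two hybrids) together with $V^*$'s final processing is a fixed quantum channel, monotonicity of trace distance gives $\trdist{H_{i-1}-H_i}\le\delta$, and the triangle inequality over the $q$ steps gives $\trdist{S^{V^*}[\ms{sim}_w]-S^{V^*}[\ms{sim}_{w'}]}\le q\delta$. Hence the WI error of $\Pi'$ is at most $q\delta+\negl=\poly(n)\cdot\delta$, polynomially related to $\delta$ as claimed; taking $\Pi$ to range over $\hvQSWI$ protocols yields $\hvQSWI\subseteq\pubQSWI$, and therefore $\pubQSWI=\QSWI=\hvQSWI$.

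The main obstacle I anticipate is not this bridge argument — which is bookkeeping once the black-box structure is in place — but certifying the structural claims (i) and (ii) inside Kobayashi's construction: that the compiled verifier can be taken independent of the simulator (so one protocol $\Pi'$ works for every witness simultaneously), and that the zero-knowledge simulator touches the honest-verifier simulator only as an oracle with a polynomial query bound, so that trace distance in the hybrid accumulates only additively. If extracting a clean black-box simulator proves awkward, a fallback is to argue WI of $\Pi'$ more directly, tracking that the new honest prover's messages expose witness-dependence only through verifier-side registers governed by the $\delta$-bound, and invoking monotonicity under $V^*$'s operations; this needs a finer look at the message structure of the compiled protocol. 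A secondary point is justifying the round-by-round form of $\hvQSWI$ assumed above — upgrading witness indistinguishability of the full verifier view to per-round indistinguishability of the verifier's registers — which for the natural quantum definition mirroring $\hvQSZK$ should be immediate but which I would isolate as a preliminary lemma, checking the error degrades only polynomially; since $\ms{sim}_w$ is then a perfect simulator, all of the error in $\Pi'$ traces back to this single parameter $\delta$.
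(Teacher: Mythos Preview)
Your proposal is correct and takes essentially the same route as the paper: apply Kobayashi's three transformations (Kitaev--Watrous round compression to $3$ messages, Marriott--Watrous private-to-public coin, and the honest-to-malicious-verifier step) while checking that each preserves prover efficiency and builds its simulator black-box from the previous one with only polynomial error blow-up. The paper packages the WI bookkeeping slightly differently---via an explicit equivalence (\cref{fact:WIZK}) between honest-verifier WI and the existence of a single \emph{unbounded} simulator, instantiated by the canonical-witness simulator and then threaded through each transformation---but this amounts to the same accounting as your bridge-and-hybrid argument with the perfect witness-dependent simulators $\ms{sim}_w,\ms{sim}_{w'}$.
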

\noindent
As a corollary of \cref{thm:main1} and the fact $\SWI \subseteq \hvSWI \subseteq \hvQSWI$:
\begin{corollary}\label{cor:main2}
    $\SWI \subseteq \QSWI$.
\end{corollary}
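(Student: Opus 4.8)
The plan is to obtain \cref{cor:main2} as an immediate consequence of \cref{thm:main1} together with the two containments $\SWI \subseteq \hvSWI \subseteq \hvQSWI$. Concretely, \cref{thm:main1} already establishes $\hvQSWI = \QSWI$ (with at most a polynomial blowup in the witness-indistinguishability error), so it suffices to argue that every problem with a classical statistical WI proof lies in $\hvQSWI$, and then chain: $\SWI \subseteq \hvSWI \subseteq \hvQSWI = \QSWI$.

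The first containment, $\SWI \subseteq \hvSWI$, is purely definitional: a proof that is statistically witness indistinguishable against \emph{all} efficient (malicious) verifiers is in particular statistically witness indistinguishable against the honest verifier, since the honest verifier is one admissible verifier strategy. For the second containment, $\hvSWI \subseteq \hvQSWI$, I would show that a classical interactive proof is a special case of a quantum interactive proof in the sense of the $\hvQSWI$ model. Given a classical protocol $(P,V)$ for an $\NP$ relation $\calR$ with the honest-verifier statistical WI property and WI error $\eps$, let the quantum honest prover and verifier simply execute $(P,V)$, measuring every register in the computational basis after each round; completeness and soundness are inherited verbatim, and the honest prover remains efficient when handed a witness (it is literally the same machine). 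The honest verifier's view is then a classical random variable — its transcript together with its coins — encoded as a density matrix that is diagonal in the computational basis. Since the trace distance between two diagonal states equals the statistical distance between the underlying distributions, the classical WI bound transfers unchanged: the two verifier states arising from witnesses $w$ and $w'$ are $\eps$-close in trace distance. Hence the problem lies in $\hvQSWI$ with the same error $\eps$, and applying \cref{thm:main1} yields membership in $\QSWI$.

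There is essentially no obstacle internal to \cref{cor:main2} itself — every genuinely quantum and nontrivial ingredient has been absorbed into \cref{thm:main1} and its Kobayashi-style round-reduction construction. The only point I would slow down on when writing out the details is the bookkeeping for the embedding $\hvSWI \hookrightarrow \hvQSWI$: one must check that the quantum definitions of the verifier's ``view'' and of the ``witness indistinguishability error'' specialize exactly to their classical counterparts when the protocol happens to be classical, so that this step is genuinely error-\emph{preserving} (and only the final invocation of \cref{thm:main1} contributes the polynomial loss).
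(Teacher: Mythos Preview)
Your proposal is correct and matches the paper's own argument exactly: the paper derives \cref{cor:main2} in one line from \cref{thm:main1} together with the chain $\SWI \subseteq \hvSWI \subseteq \hvQSWI$, and your write-up simply spells out why each containment holds. If anything, you give more detail than the paper does on the embedding $\hvSWI \hookrightarrow \hvQSWI$ (the diagonal-state / trace-distance-equals-statistical-distance observation), which the paper leaves implicit.
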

We remark that despite $\QSWI$ being an extension of the classical $\SWI$,
\cref{cor:main2} is not immediate without \cref{thm:main1},
since classical witness indistinguishability is only guaranteed against efficient classical, malicious verifiers.

\paragraph{Connection to quantum batch proofs.}
Next, we formally define quantum batch proofs.
Roughly speaking, a batch proof allows a prover to convince a verifier that many statements are all true, without communicating all of the witnesses.
We stress that there is no privacy guarantee (like ZK or WI) for the prover in this setting.
A $\rho$-compressing quantum batch proof is a quantum interactive proof for proving the correctness of $t$ statements, in which the prover and the verifier exchange less than $O(\rho t)$ qubits.

With this notion in hand, we prove that the work of Bitansky~\etal~\cite{BKPRV24} -- compiling batch proofs to witness indistinguishability proofs -- actually extends to the quantum setting.
That is, we prove the following theorem:
\begin{theorem}\label{thm:main2}
    Let $\calR$ be any $\NP$ relation.
    If $\calR$ has a $\rho$-compressing quantum batch proof, then $\calR$ has a quantum interactive proof with a non-uniform honest prover that satisfies quantum statistical witness indistinguishability against honest verifiers, with witness indistinguishability error $\sqrt{\rho}$.
\end{theorem}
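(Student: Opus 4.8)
The plan is to mimic the Bitansky–Kalai–Panangaden–Raz–Vasudevan (BKPRV) compiler in the quantum setting, using the following high-level idea: to prove a single instance $x$ with witness $w$ in a witness-indistinguishable way, the prover secretly samples $k-1$ additional \emph{independent} satisfiable instances (together with witnesses) from a fixed distribution — or, more carefully, from the structure already guaranteed by the relation — embeds $x$ in a random one of the $k$ coordinates, and runs the $\rho$-compressing batch proof on the resulting $k$-tuple. Because the batch proof communicates only $O(\rho k)$ qubits to certify $k$ instances, the average number of qubits "spent" per instance is $O(\rho)$, which by a quantum information-theoretic argument (a holographic/entropy bound, or directly the gentle-measurement / "quantum cut-and-choose" style argument) means that for a random coordinate the transcript carries only $O(\rho)$ qubits of information about that coordinate's witness. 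Converting an $O(\rho)$ "information" bound into a trace-distance bound between the two branches (prover uses $w$ versus $w'$) via a quantum Pinsker-type inequality is what yields witness indistinguishability error $\sqrt{\rho}$.

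Concretely I would proceed in the following steps. First, fix the relation $\calR$ and its $\rho$-compressing quantum batch proof $(\mathsf{P}_{\mathrm{batch}}, \mathsf{V}_{\mathrm{batch}})$ for $t = k$ instances, where $k$ is a parameter to be chosen. Second, describe the new single-instance protocol: on common input $x$ and the honest prover's private witness $w$, the prover picks $i \getsr [k]$ uniformly, fills the other $k-1$ slots with instances drawn so that the joint $k$-tuple is a valid batch instance (here I would lean on the non-uniform honest prover allowance in the theorem statement — the auxiliary instances and their witnesses can be hardwired as non-uniform advice, exactly as BKPRV do — so that no assumption about samplability of $\calR$ is needed), places $x$ in coordinate $i$, and then runs $\mathsf{P}_{\mathrm{batch}}$ with the resulting witness vector while the verifier runs $\mathsf{V}_{\mathrm{batch}}$. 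Completeness and soundness of the new protocol are inherited directly from the batch proof (soundness: if $x \notin L$ then the $k$-tuple is not entirely in $L$, so $\mathsf{V}_{\mathrm{batch}}$ rejects). Third — the heart of the argument — bound the witness-indistinguishability error. Let $\rho_{w}^{(i)}$ denote the final verifier state (transcript register) conditioned on $x$ sitting in slot $i$ and the prover using $w$; I want to show $\E_{i}\, \trdist{\rho_{w}^{(i)} - \rho_{w'}^{(i)}} \le O(\sqrt{\rho})$. The key inequality is a communication/entropy bound: since the entire protocol transmits at most $c\rho k$ qubits, the mutual information (in the appropriate quantum sense) between "which witness was used in slot $i$" and the transcript, averaged over $i$, is at most $O(\rho)$; then quantum Pinsker ($\trdist{\sigma_0 - \sigma_1} \le \sqrt{(\ln 2)/2}\cdot\sqrt{S(\sigma\,\|\,\cdot)}$, or the Fuchs–van de Graaf / Alicki–Fannes route) upgrades this to the claimed $\sqrt{\rho}$ trace-distance bound.

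The main obstacle I anticipate is making the "averaged mutual information is $O(\rho)$" step rigorous in the \emph{quantum} and \emph{interactive} setting. Classically, BKPRV can appeal to a clean chain-rule / subadditivity argument on the transcript bits and the fact that the verifier's view is a randomized function of bounded length; quantumly, the "transcript" is not a classical string — messages are quantum registers, the verifier is malicious only in the $\QSWI$ version (here we only need honest verifier, which helps), and one must be careful that entanglement between the $k$ slots created by the prover does not inflate the per-slot information beyond the naive $O(\rho k)/k$ count. I expect to handle this by working with the purified global state and invoking subadditivity of von Neumann entropy together with the bound $S(\text{register of } m \text{ qubits}) \le m$ to control $\sum_i I(\text{slot }i : \text{transcript})$, then dividing by $k$; the honest-verifier restriction ensures the verifier's actions are a fixed isometry, so no adaptive adversarial amplification of information occurs. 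A secondary subtlety is that the $k-1$ "decoy" instances must be arranged so that slot $i$ is statistically (indeed perfectly) symmetric across $i$, so that the bound obtained for a random slot transfers to the designated slot; building this symmetry into the non-uniform advice (e.g., taking the advice to be a single $k$-tuple of which $x$ is forced to be one coordinate, with the prover re-randomizing the placement) is the clean way to do it. Finally I would choose $k$ large enough — say $k = \poly(n)$ — that lower-order error terms (soundness amplification artifacts, rounding in the $O(\rho k)$ bound) are absorbed, leaving witness-indistinguishability error $\sqrt{\rho} + \negl(n)$, which is $\sqrt{\rho}$ up to the usual conventions.
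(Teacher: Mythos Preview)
Your high-level picture is right, and the information-theoretic step you sketch (subadditivity of entropy plus a Pinsker-type conversion) is essentially the content of Drucker's quantum distributional stability lemma that the paper invokes. But there is a genuine structural gap in how you set up the compiler, and it is exactly the point the paper isolates as the crux.

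The averaging argument you want --- ``$\sum_i I(\text{slot }i:\text{transcript})\le O(\rho k)$, so the average slot leaks $O(\rho)$'' --- only bites if \emph{every} slot carries an independent random bit that the transcript could in principle leak. In your description the $k-1$ decoy slots have \emph{fixed} instances and \emph{fixed} witnesses hard-wired as advice; the only randomness is the single bit ``$w$ vs.\ $w'$'' in the planted slot. In that setup the map from hidden randomness to transcript takes \emph{one} classical bit to $O(\rho k)$ qubits, and nothing prevents that one bit from being fully recoverable. The paper fixes this by giving each decoy slot \emph{two} valid witnesses $(w_i^0,w_i^1)$ and a fresh random bit $b_i$ selecting between them; now the batch prover, viewed as a function of $(b_1,\dots,b_t)$, maps $t$ bits to $O(\rho t)$ qubits, and Drucker's lemma yields that a \emph{random} coordinate's bit is $O(\sqrt{\rho})$-hidden.

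This immediately creates the second problem you glossed over: to populate the decoys with pairs $(x_i,w_i^0,w_i^1)$ whose distribution makes the planted slot look like a generic slot, the prover would seemingly need to know \emph{both} witnesses $w,w'$ of the target instance --- but in the WI game it is handed only one. Your suggestion of ``taking the advice to be a single $k$-tuple of which $x$ is forced to be one coordinate'' does not resolve this; you still have to argue the same advice works simultaneously for every pair $(w,w')$. The paper handles this with a two-player zero-sum game (row player picks $(x,w^0,w^1)$, column player picks the decoy tuple) whose value is shown to be $O(\sqrt{\rho})$ via the QDS lemma, and then applies the Lipton--Young sparse minimax theorem to extract a $\poly(n)$-size multiset of decoy tuples that achieves near-optimal payoff against \emph{every} $(x,w^0,w^1)$. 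That multiset is the non-uniform advice. Without this minimax step (or an equivalent), you have not shown the existence of advice that makes the protocol WI for all witness pairs.
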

First, we emphasize that the batch proof we start from only satisfies completeness and soundness, and has no witness indistinguishability (or any other privacy) guarantee.
Second, we remark that the resulting QSWI protocol satisfies only a weak notion of witness indistinguishability, since the WI error is $\sqrt{\rho}$, which is inverse polynomial, as opposed to the standard negligible error.
Additionally, the honest prover is crucially non-uniform.
These weaknesses are inherited by \cite{BKPRV24}.
Second, using the ``moreover'' part of \cref{thm:main1}, we conclude that regardless of the batch proof we start from, the resulting QSWI protocol can be made public-coin, $3$-message, and secure against malicious verifiers.\footnote{The transformation preserves the number of messages, therefore this remark is only useful if the batch proof we start from is interactive. If the batch proof is non-interactive, then the resulting QSWI proof is also non-interactive and therefore trivially ``public-coin'' and secure against malicious verifiers, since the verifier does not communicate anything to the prover.}
We note that classical batch proofs are \emph{not} known to imply classical \emph{malicious}-verifier SWI proofs.

\subsection{Discussion}
\paragraph{Understanding statistical witness indistinguishability.}
One takeaway from our results is that $\QSWI$ is easier to reason about than $\SWI$: we prove facts about $\QSWI$ that we do not know about $\SWI$.
At the same time, we prove that $\SWI \subseteq \QSWI$.
This means that we can potentially prove bounds on $\SWI$ by proving bounds on $\QSWI$.
For example, \emph{if} we prove that $\QSWI$ is closed under complement, then $\SWI$ is not equal to $\NP$ (unless the polynomial hierarchy collapses).

\paragraph{On the power of $\QSWI$.}
While $\SWI$ is conjectured to not contain $\NP$,
we stress (again) that there is no evidence to support this conjecture.
Since $\QSWI$ is at least as powerful as $\SWI$, we refrain from conjecturing that $\NP \not\subseteq \QSWI$.
We believe there is value in trying to tackle the problem ``from both directions''.

As such, readers that believe $\NP \not\subseteq \QSWI$ might interpret \cref{thm:main2} as evidence that quantum batching of $\NP$ proofs is unlikely.
On the other hand, the more ``optimistic'' reader might interpret \cref{thm:main2}
as an approach to place $\NP$ in $\QSWI$.
We reiterate that \cref{thm:main2} can only give a weak notion of witness indistinguishability, since the witness indistinguishability error using this approach is inherently inverse polynomial rather than negligible.
But since no unconditional ZK/WI proof for all of $\NP$ is known, even a weak QSWI proof for $\NP$ would be a very interesting result!

\paragraph{Quantum batch proofs.}
We think that quantum batching of proofs is an interesting subject in its own right.
In the classical setting, a line of work by Reingold, Rothblum and Rothblum showed how to batch $\UP$ proofs \cite{RRR16,RRR18}.
Another line of work showed how to batch problems in the class $\NISZK$~\cite{KRRSV20,KRV21,MNRV24} even with efficient provers~\cite{KRV24}.
Extending those results using quantum interactive proofs is an intriguing open problem: can we batch classes beyond $\NISZK$ (maybe $\SZK$) or $\UP$ (maybe $\NP$) using quantum interactive proofs?
Notably, under cryptographic assumptions, one can push batching to all of $\NP$~\cite{CJJ21a,CJJ21b,WW22,CGJJZ23}, and even support monotone batching policies \cite{BBKLP24,NWW24,NWW25}.

\paragraph{Definitional choice.}
We defined witness indistinguishability with respect to a classical relation, and thus, classical witnesses are used in the definition.
One can imagine defining quantum witness indistinguishability with respect to quantum witnesses, which leads to a ``$\QMA$ version'' of $\QSWI$.
We defer the study of such a class to future work, although it seems like our results should seamlessly extend to the quantum witness setting.

Finally, we point out the work of Popa~\cite{Pop11} which studies witness indistinguishability of \emph{classical} interactive proofs against quantum (malicious) verifiers.
In contrast, our setting studies \emph{quantum} interactive proofs.

\subsection{Technical Overview}

\paragraph{The problem with $\SWI$.}
Before we explain our techniques, we discuss the challenge in studying the classical $\SWI$.
In the classical setting, most of the success in studying $\SZK$ came from proving the existence of a complete problem \cite{SV03}, and using the zero-knowledge protocol for that problem as well as its structure to derive results about all of $\SZK$.
Establishing a complete problem for $\SWI$ seems much more elusive.

Work prior to \cite{SV03} managed to prove some results about $\SZK$, such as closure under complement and having public-coin protocols \cite{Oka96}.
Let us discuss the techniques used by Okamoto \cite{Oka96} and why they fail in the witness indistinguishability setting:
\begin{itemize}
    \item To prove that $\SZK$ has public-coin protocols, Okamoto uses a set lower-bound protocol \cite{GS86}.
    This results in an inefficient prover. Achieving witness indistinguishability with an inefficient prover is trivial since the prover can always find a ``canonical'' witness and send it to the verifier.
    \item To prove closure under complement, Okamoto uses a technique developed by Fortnow \cite{For87} which dictates the verifier runs the zero-knowledge simulator.
    While witness indistinguishability can be defined using a simulator, that simulator is inherently inefficient (unless WI is equivalent to some sort of ZK). This results in an inefficient verifier.
\end{itemize}
Therefore,  when porting ``ZK'' techniques to the ``WI'' setting, one should ensure that any transformation preserves the prover's efficiency, and does not require any party to run the simulator.

\paragraph{Our approach for $\QSWI = \hvQSWI$.}
We start by recalling Kobayashi's approach \cite{Kob08} to proving that $\QSZK = \hvQSZK$:
\begin{enumerate}
    \item Using the round compression technique of \cite{KW00}, we can compress any honest-verifier QSWI proof to only $3$ messages.
    Kobayashi \cite{Kob08} proved that this transformation preserves honest-verifier QSZK.

    \item Marriott and Watrous \cite{MW05} showed
    how to compile a $3$-message private-coin QIP into a $3$-message public-coin QIP, in which the verifier sends a single classical bit.
    Again, Kobayashi \cite{Kob08} proved that this transformation preserves honest-verifier QSZK.

    \item Finally, using Watrous' quantum rewinding technique \cite{Wat06}, Kobayashi \cite{Kob08} proved that the resulting public-coin protocol (where the verifier sends only a single classical bit) is also QSZK against malicious verifiers.
\end{enumerate}

Next, we observe that $\QSWI$ can also be defined using a simulator: a QIP  $(P,V)$ with an efficient prover is QSWI if and only if there exists a (potentially unbounded) simulator that simulates the verifier's view.
By carefully analyzing all of the previous transformations,
we prove that they all preserve the honest prover's efficiency and they still work with unbounded simulators.
This is in contrast to Okamoto's techniques \cite{Oka96} for classical SZK.
As a result, we get the following:
\begin{enumerate}
    \item Round compression for honest-verifier QSWI protocols, where the resulting protocol has only $3$ messages.

    \item Private-coin to public-coin transformation for $3$-message honest-verifier QSWI.

    \item Security against malicious-verifiers for $3$-message honest-verifier QSWI proofs.

\end{enumerate}

We point out that while Kobayashi also proved that completeness errors for QSZK can be generically eliminated, we could not use his protocol to prove the same for QSWI, because it may not preserve the prover's efficiency.
However, the above three transformations do preserve perfect completeness.

\paragraph{Our approach for quantum batching to QSWI.}
The idea behind the transformation of \cite{BKPRV24} is that batch proofs necessarily lose information about some of the witnesses, 
since the number of total communicated bits is less than the lengths of all of the witnesses.

We now explain how to utilize this idea.
Let $(P,V)$ be a $\rho$-compressing batch proof for the relation $\calR$ with batches of size $t$.
For simplicity, assume it is non-interactive ($P$ just sends a proof of size $O(\rho t)$, and $V$ either accepts or rejects).
The QSWI prover
samples $t-1$ ``dummy'' instance-witness pairs, and proves their correctness, alongside the target instance, using the batch proof.
The hope is that by choosing the correct distribution, the witness of the target instance gets lost.

In more detail, 
on input $(x,w) \in \calR$, the QSWI prover sets up a batch as follows:
\begin{enumerate}
    \item Chooses a random index $j \getsr [t]$.
    \item Sets $x_j = x$ and $w_j = w$.
    \item For all $i \neq j$, the prover chooses $(x_i, w_i^0, w_i^1)$ from some distribution $D$ that satisfies $(x_i,w_i^b) \in \calR$.
    \item Chooses a random string $b_1\cdots b_t \getsr \zo^{t}$, and uses witness $w_i^{b_i}$ for all $i \neq j$.
\end{enumerate}
The QSWI prover runs $P$ to obtain a proof $\pi$ and sends $\pi$ as well as $x_1,\ldots,x_y$ to the verifier.
The verifier simply checks that $x_j = x$ for some $j \in [t]$, then runs $V$ and accepts or rejects accordingly.
Completeness and soundness of this generic protocol follows by the completeness and soundness of $(P,V)$, and the fact that
$D$ is a distribution over correct instances and their corresponding witnesses, which guarantees that
$(x_i,w_i^{b_i}) \in \calR$ for all $i \neq j$.

Proving WI is trickier.
Let $x,w_0,w_1$ such that $(x,w_0),(x,w_1) \in \calR$.
Imagine that $D$ is simply fixed to be $(x,w_0,w_1)$.
Now define a function $f_{x,w_0,w_1}(b_1\cdots b_t) = P\big(x^t, w^{b_1},\cdots,w^{b_t}\big)$, i.e., runs the batch proof prover $P$ on a batch consisting of $t$ copies of $x$, and each input bit $b_i$ indicates which of the witnesses $w_0,w_1$ to use for copy $i$.
Observe that $f$ is a $\rho$-compressing function, and it cannot distinguish what $j$ is (where the ``real'' $x$ was planted), therefore we expect it to lose information on the bit $b_j$.
This exactly corresponds to the proof losing information on which witness the prover used for the real instance.

It turns out that this intuition translates well to the quantum setting, where $f$ maps $t$ classical bits to $\rho t$ qubits.
This is captured by the quantum distributional stability lemma of \cite{Dru12}, which we adapt in \cref{sec:q-info} to our setting.

The problem with the previous approach is that the prover does not get both witnesses $w_0,w_1$ in the WI game, therefore it cannot sample from the distribution $D$ that we defined.
Bitansky~\etal~\cite{BKPRV24} handle this problem using a sparse minimax theorem \cite{LY94}, at the price of using a non-uniform honest prover.
For more details, see \cref{sec:batch-to-qswi}.

\section{Preliminaries}
For any positive integer $n \in \N$, we denote $[n] = \set{1,\ldots,n}$.
For any random variables $X, Y$ and positive integer $t \in \N$, we denote by $X^{\otimes n}$ the distribution of $t$ independent samples of $X$, and denote by $X \concat Y$ the distribution of concatenating a sample of $X$ with a samples of $Y$.
We denote by $U$ the uniform distribution over $\zo$.
For the uniform distribution specifically, we denote $U^t \deq U^{\otimes t}$, i.e., the uniform distribution over $\zo^t$.
For any $j \in [t], \beta \in \zo$, we denote by $U^t|_{j \gets \beta}$ the uniform distribution conditioned on bit $j$ being equal to $\beta$.
For any set $S$, we write $x \getsr S$ to denote a random variable $x$ that is sampled uniformly from $S$.
For any $\NP$ relation $\calR$ and any $x$, we denote $\calR(x) = \set{ w : (x,w) \in \calR}$.

\subsection{Quantum Information}\label{sec:q-info}
The following definition is adapted from \cite{Dru12}, and is actually just a specific case of \cite[Definition 8.9]{Dru12} that suffices for our use-case.
\begin{definition}[Quantum Distributional Stability \cite{Dru12}]\label{def:qds}
    Let $t,t' \in \N$.
    Given any mapping $f$ that inputs $t$ classical bits and outputs $t'$ qubits, we define for each $j \in [t]$
    \[ \gamma_j \deq \Ex_{\beta \sim U} \left[ \trdist{f\big(U^t|_{j \gets \beta}\big) - f\big(U^t\big)} \right]. \]
    For any $\delta \in [0,1]$, we say that $f$ is $\delta$-quantumly-distributionally stable ($\delta$-QDS) if
    \[ \E_{j \getsr [t]} [\gamma_j] \leq \delta. \]
\end{definition}

\begin{lemma}[QDS of Compressing Functions \cite{Dru12}]\label{lem:qds}
    Let $t,t' \in \N$, and let $f$ be any mapping that inputs $t$ classical bits and outputs at most $t'$ qubits.
    Then $f$ is $O(\sqrt{t'/t})$-QDS.
\end{lemma}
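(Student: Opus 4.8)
The plan is to run a standard quantum information argument: view the output of $f$ on a uniformly random input as a $t'$-qubit register, bound its von Neumann entropy by (roughly) $t'$, and use a chain-rule/superadditivity argument to show that on average over $j \getsr [t]$ the input bit $B_j$ is nearly uncorrelated with the output; then convert ``small mutual information'' into ``small trace distance'' via the quantum Pinsker inequality. Concretely, I would first form the classical--quantum state $\sigma_{B_1\cdots B_t Q} = 2^{-t}\sum_{b\in\zo^t} \ketbra{b}{b} \otimes f(b)$, where $B = B_1\cdots B_t$ is uniform and $Q$ holds the $t'$-qubit output, write $\rho \deq f(U^t) = \sigma_Q$, and for each $j \in [t]$, $\beta\in\zo$ write $\rho_j^\beta \deq f(U^t|_{j\gets\beta})$ (the output state averaged over the other $t-1$ coordinates). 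Since $\rho = \tfrac12(\rho_j^0 + \rho_j^1)$, we get $\rho_j^0 - \rho = \tfrac12(\rho_j^0 - \rho_j^1) = -(\rho_j^1-\rho)$, hence $\trdist{\rho_j^0-\rho} = \trdist{\rho_j^1-\rho}$ and $\gamma_j = \trdist{\rho_j^0 - \rho}$.

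Next I would identify $I(B_j:Q)_\sigma$ with the Holevo quantity of the ensemble $\{(\tfrac12, \rho_j^0), (\tfrac12, \rho_j^1)\}$, that is, $I(B_j:Q) = \tfrac12 D(\rho_j^0 \,\|\, \rho) + \tfrac12 D(\rho_j^1 \,\|\, \rho)$ where $D$ is quantum relative entropy. The quantum Pinsker inequality gives $D(\rho_j^\beta \,\|\, \rho) \geq c\,\trdist{\rho_j^\beta - \rho}^2$ for an absolute constant $c > 0$, so $I(B_j:Q) \geq c\,\gamma_j^2$, i.e. $\gamma_j \leq O\!\big(\sqrt{I(B_j:Q)}\big)$.

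Finally I would bound $\sum_{j\in[t]} I(B_j:Q)$. Because the bits $B_1,\ldots,B_t$ are mutually independent, conditioning on $B_{<j}$ can only increase the mutual information with $Q$, so $I(B_j:Q) \leq I(B_j : Q \mid B_{<j})$; summing and applying the chain rule, $\sum_j I(B_j:Q) \leq \sum_j I(B_j:Q\mid B_{<j}) = I(B_1\cdots B_t : Q) \leq S(Q) \leq t'$, where the penultimate step is the Holevo bound ($B$ classical) and the last uses that $Q$ consists of $t'$ qubits. Hence $\Ex_{j\getsr[t]} I(B_j:Q) \leq t'/t$, and combining with the previous paragraph and Jensen's inequality (concavity of $\sqrt{\cdot}$), $\Ex_{j\getsr[t]}\gamma_j \leq O\!\big(\Ex_j\sqrt{I(B_j:Q)}\big) \leq O\!\big(\sqrt{\Ex_j I(B_j:Q)}\big) = O(\sqrt{t'/t})$, which is the claim.

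I expect the genuine content to lie entirely in the two imported facts --- the quantum Pinsker inequality and superadditivity of mutual information under independent inputs --- both of which are standard; the main obstacle is instead careful bookkeeping of definitions: that $f$ applied to a distribution denotes the averaged output density matrix, that $\rho_j^\beta$ is this average over the remaining coordinates (not $f$ evaluated at one string), and that the normalization of $\trdist{\cdot}$ used here (the trace norm, with no factor $\tfrac12$) is tracked consistently through Pinsker. None of this affects the $O(\cdot)$, but it must be stated to make the chain of (in)equalities literally correct.
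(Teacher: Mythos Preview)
The paper does not give its own proof of this lemma; it is simply quoted as a result of Drucker~\cite{Dru12} and used as a black box. So there is nothing in the paper to compare your argument against beyond the bare statement.

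That said, your proposal is correct and is precisely the standard information-theoretic argument one uses to establish results of this type (and is in the spirit of Drucker's original proof): form the cq-state on $(B_1\cdots B_t,Q)$, identify $\gamma_j$ with $\trdist{\rho_j^0-\rho}$, bound $\gamma_j^2$ by $I(B_j:Q)$ via quantum Pinsker, use independence of the $B_j$'s together with the chain rule to get $\sum_j I(B_j:Q)\le I(B:Q)\le S(Q)\le t'$, and finish with Jensen. All the steps check out, including the superadditivity step $I(B_j:Q)\le I(B_j:Q\mid B_{<j})$, which holds because $I(B_j:B_{<j})=0$ while $I(B_j:B_{<j}\mid Q)\ge 0$. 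Your caveats about normalization of $\trdist{\cdot}$ and the meaning of $f$ applied to a distribution are the right things to flag, and none of them affect the final $O(\sqrt{t'/t})$.
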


As an immediate corollary of \cref{lem:qds}, we get a quantum version of \cite[Lemma 8]{Del16}:
\begin{corollary}
    \label{lem:compress}
    Let $t \in \N$ and $\rho \in (0,1)$, and let $f$ be any mapping that inputs $t$ classical bits and outputs at most $\rho t$ qubits.
    Then
    \[ \E_{j \getsr [t]} \left[ \trdist{f\big(U^t|_{j \gets 0}\big) - f\big(U^t|_{j \gets 1}\big)} \right] = O(\sqrt{\rho}). \]
\end{corollary}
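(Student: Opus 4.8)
The plan is to deduce the corollary directly from \cref{lem:qds} by a triangle-inequality argument, relating the quantity $\trdist{f(U^t|_{j\gets 0}) - f(U^t|_{j \gets 1})}$ to the quantity $\gamma_j$ that appears in \cref{def:qds}. First I would apply \cref{lem:qds} with $t' = \rho t$: since $f$ outputs at most $\rho t$ qubits, the lemma tells us $f$ is $O(\sqrt{\rho t / t}) = O(\sqrt{\rho})$-QDS, which by \cref{def:qds} means $\E_{j \getsr [t]}[\gamma_j] \leq O(\sqrt{\rho})$, where $\gamma_j = \Ex_{\beta \sim U}\bigl[ \trdist{f(U^t|_{j \gets \beta}) - f(U^t)} \bigr] = \tfrac12 \trdist{f(U^t|_{j\gets 0}) - f(U^t)} + \tfrac12 \trdist{f(U^t|_{j\gets 1}) - f(U^t)}$.

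The key step is then the pointwise (in $j$) bound
\[
  \trdist{f(U^t|_{j\gets 0}) - f(U^t|_{j\gets 1})} \leq \trdist{f(U^t|_{j\gets 0}) - f(U^t)} + \trdist{f(U^t) - f(U^t|_{j\gets 1})} = 2\gamma_j,
\]
which is just the triangle inequality for the trace distance together with the observation that the two terms on the right are exactly (twice the average of) the two summands defining $\gamma_j$. Here I am implicitly using that $f(U^t)$ denotes the average output state $\tfrac12 f(U^t|_{j\gets 0}) + \tfrac12 f(U^t|_{j\gets 1})$ (the output of $f$ on the mixture), so $f(U^t)$ is a genuine "midpoint-like" state sitting between the two conditional states; this is what makes the triangle inequality tight enough to only lose a factor of $2$.

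Finally I would take the expectation over $j \getsr [t]$ of this pointwise inequality and combine with the QDS bound:
\[
  \E_{j \getsr [t]}\left[ \trdist{f(U^t|_{j\gets 0}) - f(U^t|_{j\gets 1})} \right] \leq 2\,\E_{j \getsr [t]}[\gamma_j] \leq O(\sqrt{\rho}),
\]
absorbing the constant $2$ into the $O(\cdot)$. I do not expect any real obstacle here — the only thing to be careful about is the bookkeeping of the factor-$\tfrac12$ weights in the definition of $\gamma_j$ versus the factor $2$ from the triangle inequality, and confirming that the intended reading of $f(U^t)$ is the output on the uniform mixture (so that it coincides with the arithmetic mean of the two conditional output states), which is the fact that keeps the loss at a constant factor rather than something worse.
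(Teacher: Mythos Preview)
Your proposal is correct and follows essentially the same argument as the paper: apply \cref{lem:qds} with $t' = \rho t$, split $\trdist{f(U^t|_{j\gets 0}) - f(U^t|_{j\gets 1})}$ via the triangle inequality through $f(U^t)$, observe that the two resulting terms sum to $2\gamma_j$, and take expectations. The paper carries the expectation over $j$ through the computation from the start rather than establishing a pointwise bound first, but this is purely cosmetic.
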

\begin{proof}
Let $f$ be a function as described in the premise. By \cref{lem:qds}, we have that $f$ is $O(\sqrt{t'/t})$-QDS.
    \begin{align*}
        \E_{j \getsr [t]} \left[ \trdist{f\big(U^t|_{j \gets 0}\big) - f\big(U^t|_{j \gets 1}\big)} \right] &=
        \E_{j \getsr [t]} \left[ \trdist{f\big(U^t|_{j \gets 0}\big) - f\big(U^t\big) + f\big(U^t\big) - f\big(U^t|_{j \gets 1}\big)} \right] \\
        &\leq \E_{j \getsr [t]} \left[ \trdist{f\big(U^t|_{j \gets 0}\big) - f\big(U^t\big)} \right] + \E_{j \getsr [t]} \left[ \trdist{f\big(U^t|_{j \gets 1}\big) - f\big(U^t\big)} \right] \\
        &= 2 \E_{\beta \sim U} \left[ \E_{j \getsr [t]} \left[ \trdist{f\big(U^t|_{j \gets \beta}\big) - f\big(U^t\big)} \right]\right] \\
        &= 2 \E_{\changed{j \getsr [t]}} \left[ \E_{\changed{\beta \sim U}} \left[ \trdist{f\big(U^t|_{j \gets \beta}\big) - f\big(U^t\big)} \right]\right] \\
        &= 2 \E_{j \getsr [t]}[\gamma_j] \\
        &= O(\sqrt{t'/t})
    \end{align*}
    and the claim follows.
\end{proof}

\subsection{Quantum Interactive Proofs}
In this section, we define quantum interactive proofs for relations, rather than languages. The difference is that the (honest) prover gets an additional input which can be viewed as the witness for the instance.
This modification is crucial when studying proofs with efficient-prover, especially when we are interested in ``privacy'' properties like zero-knowledge or witness indistinguishability.

\paragraph{Quantum Interactive Proofs.}
A $k(n)$-round quantum interactive proof for a relation $\calR$ is a pair of mappings $P$ and $V$.
For any $x \in \zo^n$ and $w \in \zo^m$, we interpret $V(x)$ (respectively, $P(x,w)$) as $k\deq k(n)$ quantum circuits $(V_1(x),\ldots,V_k(x))$ (respectively, $(P_1(x,w),\ldots,P_k(x,w))$).
For each $j \in [k]$, the qubits upon which each $V_j(x)$ (respectively, $P_j(x,w)$) acts are divided into two sets: $q_{V}(n)$(respectively,  $q_{P}(n)$) of them are private qubits and $q_{M}(n)$ are (public) message qubits. Note that $V_j(x)$ and $P_j(x,w)$ have the same number of message qubits, and that $q_{V}(n)$, $q_{P}(n)$ and $q_{M}(n)$ are all fixed across all $j \in [k]$.
There is a designated private verifier qubit for the circuit $V_k(x)$ which is the output qubit, which indicates whether the verifier accepts or rejects.
Given a prover $P$, a verifier $V$ and an input $(x,w)$, we define a quantum circuit $\langle P(w),V\rangle(x)$ acting on $q_{V}(n) + q_{M}(n) + q_{P}(n)$ as follows: apply the following circuits 
\[ P_1(x,w), V_1(x),\ldots, P_k(x,w), V_k(x) \]
in order, each to the prover/message qubits or to the verifier/message qubits, accordingly.
The output of $\langle P(w),V\rangle(x)$ is the output qubit when measured in the standard basis.
A QIP should satisfy the following properties:
\begin{enumerate}
    \item \textbf{Verifier Efficiency:} There exists a polynomial $\poly(n)$ such that for any $x \in \zo^n$, the mapping $V(x)$ can be computed in $\poly(\abs{x})$ time. Specifically, each circuit $V_j(x)$ can be computed in polynomial time and
    $k(n)$, $q_{V}(n)$ and $q_{M}(n)$ are all bounded by $\poly(n)$.

    \item \textbf{Completeness:} If $(x,w) \in \calR$ then $\Pr[\langle P(w),V\rangle(x)] \geq 1- \ceps$.

    \item \textbf{Soundness:} If $x \notin \LL(\calR)$ then for any (potentially unbounded) $P^*$ we have $\Pr[\langle P,V\rangle(x)] \leq \seps$. Note that we do not need to pass any input to $P$ since it can just be hard-coded.
\end{enumerate}
Additionally, we say that a QIP has an \textbf{efficient prover} if it satisfies the following additional property:
\begin{enumerate}[start=4]
    \item \textbf{Prover Efficiency:} There exists a polynomial $\poly(n)$ such that for any $(x,w) \in \R$ , the mapping $P(x,w)$ can be computed in $\poly(\abs{x} + \abs{w})$ time. Specifically, each circuit $P_j(x,w)$ can be computed in polynomial time and
    $q_{P}(n)$ is also bounded by $\poly(\abs{x} + \abs{w})$. 
\end{enumerate}

\paragraph{The View of a Verifier.}
The view of the verifier $V$ at round $j$ is defined as the reduced state of the verifier and message qubits after the prover sent its $j^{\text{th}}$ message. That is, for any $x \in \zo^n$ and any $w \in \zo^m$, we denote by $\View_V(x,w,j)$ the reduced state of the verifier and message qubits after $P_j(x,w)$ was applied.
Note that given $\View_V(x,w,j)$, we can always compute the ``view'' of the verifier after it sends its $j^{\text{th}}$ message simply by applying $V_j(x)$. Furthermore, by the efficiency of $V$, this can be computed in polynomial time.

\section{Quantum Statistical Witness Indistinguishability}
We can now present our definition of quantum statistical witness indistinguishability.
Throughout this entire section, $m$ will denote the number of messages in the QIP, while $k$ will denote the number of rounds.
In general, $k = \ceil{m/2}$.
\begin{definition}[Honest-Verifier Quantum Statistical Witness Indistinguishability]\label{def:hvQSWI}
    Let $m \in \N$ and $k = \ceil{m/2}$.
    A $m$-message QIP $(P,V)$ for an $\NP$ relation $\R$ is an honest-verifier QSWI proof with $\wieps(n)$ witness-indistinguishability error if it has an efficient prover and for any $x,w_0,w_1$ such that $(x,w_0),(x,w_1) \in \R$, and any $j \in [k]$ we have
    \[ \trdist{\View_{V}(x,w_0,j) - \View_{V}(x,w_1,j)} \leq \wieps(\abs{x}). \]

    We denote by $\hvQSWI\big(m,1-\ceps, \seps, \wieps\big) $ the class of all $\NP$ languages for which there exists an $\NP$ relation with an honest-verifier QSWI proof
    with $m$ messages, $\ceps$ completeness error, $\seps$ soundness error, and $\wieps$ WI error.
    We abuse notation and say that a QIP $(P,V)$ is a $\hvQSWI\big(m,1-\ceps, \seps, \wieps\big)$ protocol if it is an $m$-message honest-verifier QSWI protocol with the appropriate errors.
    Furthermore, we denote $\hvQSWI \deq \hvQSWI\big(\poly(n),2/3, 1/3, \negl(n)\big)$ where $\poly(n)$ is some polynomial and $\negl(n)$ is some negligible function.
\end{definition}

\begin{remark}
    Observe that QSWI becomes trivial if we allow the honest prover $P$ to run in exponential time, since it can just ignore its input witness $w$ and find some canonical $w'$ (say, the first lexicographical valid witness for $x$) and just send that in the clear. In this case, the view of the verifier does not depend on $w$.
\end{remark}

\begin{definition}[Malicious-Verifier Quantum Statistical Witness Indistinguishability]\label{def:mvQSWI}
    An honest-verifier QSWI proof $(P,V)$ for an $\NP$ relation $\R$ is also \emph{malicious}-verifier QSWI proof with $\wieps(n)$ WI error if for any efficient verifier $V^*$ and any $x,w_0,w_1$ such that $(x,w_0),(x,w_1) \in \R$, and any $j \in [k]$ we have
    \[ \trdist{\View_{V^*}(x,w_0,j) - \View_{V^*}(x,w_1,j)} \leq \wieps(\abs{x}). \]

    We define $\QSWI\big(m,1-\ceps, \seps, \wieps\big)$ and $\QSWI$ completely analogously to the honest-verifier case.
\end{definition}

\begin{fact}[WI and ZK with unbounded simulation]\label{fact:WIZK}
    A QIP $(P,V)$ is
    an $\hvQSWI\big(m,1-\ceps, \seps, \wieps\big)$ protocol if and only if there exists a (potentially unbounded) simulator $S$ such that for any $(x,w) \in \R$ and any $j \in [k]$:
    \[ \trdist{\View_{V}(x,w,j) - S(x,j)} \leq O(\wieps). \]
\end{fact}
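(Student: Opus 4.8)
The plan is to prove the two directions separately; the statement is the quantum statistical analogue of the standard equivalence between the ``indistinguishability'' and ``simulator'' formulations of witness indistinguishability, so the argument should be short. The guiding observation is that witness indistinguishability only asks the verifier's view to be essentially \emph{independent} of which valid witness the honest prover holds, so the natural simulator fixes one canonical witness and runs the honest prover with it; that the simulator is allowed to be unbounded is exactly what makes ``find a canonical witness'' legitimate.

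For the ``only if'' direction, I would assume $(P,V)$ is an $\hvQSWI(m,1-\ceps,\seps,\wieps)$ protocol and define $S$ as follows. On input $(x,j)$ with $x \in \LL(\R)$, let $w^*_x$ be the lexicographically smallest string in $\R(x)$; this is well defined because $\R$ is an $\NP$ relation, so valid witnesses have polynomially bounded length and membership in $\R$ is efficiently decidable, and hence an unbounded $S$ can find $w^*_x$ by brute-force enumeration. Output $S(x,j) \deq \View_{V}(x,w^*_x,j)$, which $S$ computes exactly by running the honest execution of $(P,V)$ on $(x,w^*_x)$ through the prover's $j^{\text{th}}$ message and tracing out the prover's private register (a finite quantum computation). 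For $x \notin \LL(\R)$ define $S(x,j)$ arbitrarily, since it is never referenced. Note that $S$ depends only on $(x,j)$, never on $w$. Then for every $(x,w) \in \R$ and every $j \in [k]$, both $w$ and $w^*_x$ lie in $\R(x)$, so the honest-verifier QSWI guarantee gives $\trdist{\View_{V}(x,w,j) - S(x,j)} = \trdist{\View_{V}(x,w,j) - \View_{V}(x,w^*_x,j)} \leq \wieps$, which is the required bound (with constant $1$).

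For the ``if'' direction, suppose a (possibly unbounded) simulator $S$ satisfies $\trdist{\View_{V}(x,w,j) - S(x,j)} \leq \delta$ for all $(x,w) \in \R$ and all $j \in [k]$, where $\delta = O(\wieps)$. Fix $x,w_0,w_1$ with $(x,w_0),(x,w_1) \in \R$ and any $j \in [k]$. Since $S(x,j)$ does not depend on the witness, the triangle inequality for trace distance, applied at the intermediate point $S(x,j)$, gives $\trdist{\View_{V}(x,w_0,j) - \View_{V}(x,w_1,j)} \leq 2\delta = O(\wieps)$. As verifier efficiency, prover efficiency, completeness, and soundness are properties of $(P,V)$ alone and are untouched, $(P,V)$ is therefore an $\hvQSWI(m,1-\ceps,\seps,O(\wieps))$ protocol; the factor of $2$ incurred here is exactly what the $O(\cdot)$ in the statement absorbs, so the two formulations agree up to a constant factor in the WI error.

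I do not expect a genuine obstacle. The two things to be careful about are (i) checking that the canonical-witness simulator is genuinely realizable by an unbounded machine---find $w^*_x$ by search, then run a fixed finite computation---and that it truly ignores $w$, and (ii) tracking the constant-factor loss so that the ``$O(\wieps)$'' in the statement is justified; of these, (i) is the more delicate point, though it is still routine. I would also add a one-line remark that the same argument, with $S$ additionally taking the efficient malicious verifier $V^*$ as an auxiliary input, characterizes $\QSWI$ of \cref{def:mvQSWI}, and that the zero-knowledge version of the equivalence is standard.
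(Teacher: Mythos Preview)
Your proposal is correct and follows essentially the same approach as the paper: for the ``only if'' direction you both define the unbounded simulator to search for a canonical (lexicographically first) witness and output the honest view using it, and for the ``if'' direction you both apply the triangle inequality through $S(x,j)$. Your write-up is slightly more careful about why the unbounded simulator can realize the canonical-witness search and about the constant-factor bookkeeping, and your closing remark about the malicious-verifier analogue is exactly what the paper records as \cref{fact:mvWIZK}.
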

\begin{proof}
    Let $(P,V)$ be an $\hvQSWI\big(m,1-\ceps, \seps, \wieps\big)$ protocol.
    We define the unbounded simulator $S(x,j)$ as follows:
    \begin{enumerate}
        \item Find a canonical witness (say, the lexicographically first witness) $w^*$ such that $(x,w^*) \in \calR$.
        \item Simulate the interaction $\langle P(w),V\rangle(x)$, and output $\View_V(x,w^*,j)$.
    \end{enumerate}
    By witness indistinguishability, for any $(x,w) \in \calR$ we have that $\trdist{\View_{V}(x,w,j) - S(x,j)} \leq \wieps$. This concludes the first direction.

    Let $(P,V)$ be a $m$-message QIP with a simulator $S(x,j)$ such that $\trdist{\View_{V}(x,w,j) - S(x,j)} \leq \wieps/2$ for any $j \in [k]$ and $(x,w) \in \calR$.
    Fix an $x$ in the language and let $w_0,w_1 \in \calR(x)$.
    By the assumption, we have 
    $\trdist{\View_{V}(x,w_b,j) - S(x,j)} \leq \eps/2$ for all $b \in \zo$.
    By the triangle inequality, we have
    \[ \trdist{\View_{V}(x,w_0,j) - \View_{V}(x,w_1,j)} \leq \trdist{\View_{V}(x,w_0,j) - S(x,j)} + \trdist{\View_{V}(x,w_0,j) - S(x,j)} \leq \wieps. \]
    This concludes the second direction and the proof.
\end{proof}

\begin{fact}[WI and ZK with unbounded simulation]\label{fact:mvWIZK}
    A QIP $(P,V)$ is
    an $\QSWI\big(m,1-\ceps, \seps, \wieps\big)$ protocol if and only if for any efficient verifier $V^*$ there exists a (potentially unbounded) simulator $S$ such that for any $(x,w) \in \R$ and any $j \in [k]$:
    \[ \trdist{\View_{V^*}(x,w,j) - S(x,j)} \leq O(\wieps). \]
\end{fact}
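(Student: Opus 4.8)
Fact~\ref{fact:mvWIZK} is proved by mirroring the proof of \cref{fact:WIZK} almost verbatim, now carrying the universal quantifier over efficient malicious verifiers $V^*$ through both directions. For the forward direction, suppose $(P,V)$ is a $\QSWI\big(m,1-\ceps,\seps,\wieps\big)$ protocol and fix an efficient verifier $V^*$. I would define the unbounded simulator $S = S_{V^*}$ on input $(x,j)$ to first search exhaustively for the lexicographically first $w^*$ with $(x,w^*) \in \calR$ --- this is the only place unboundedness of $S$ is used --- and then run the interaction $\langle P(w^*),V^*\rangle(x)$ and output $\View_{V^*}(x,w^*,j)$. Since $V^*$ is a fixed (efficient, possibly non-uniform) circuit family that $S$ may depend on, this is well defined. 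Applying the malicious-verifier WI guarantee of \cref{def:mvQSWI} to the pair $w,w^*$ yields $\trdist{\View_{V^*}(x,w,j) - S(x,j)} \leq \wieps(\abs{x})$ for every $(x,w)\in\calR$ and $j \in [k]$, which is exactly the required simulator property.

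For the backward direction, fix an efficient verifier $V^*$ and let $S = S_{V^*}$ be the promised simulator. For any $x \in \LL(\calR)$, any $w_0,w_1 \in \calR(x)$, and any $j \in [k]$, the triangle inequality gives
\[ \trdist{\View_{V^*}(x,w_0,j) - \View_{V^*}(x,w_1,j)} \leq \trdist{\View_{V^*}(x,w_0,j) - S(x,j)} + \trdist{S(x,j) - \View_{V^*}(x,w_1,j)} \leq O(\wieps(\abs{x})). \]
Since $V^*$ was arbitrary, $(P,V)$ satisfies \cref{def:mvQSWI} with WI error $O(\wieps)$, as claimed.

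I do not expect a real obstacle here --- the statement is bookkeeping on top of \cref{fact:WIZK}. The one point to state carefully is that the simulator is permitted to depend on $V^*$ (and be non-uniform), so that ``simulating $\langle P(w^*),V^*\rangle(x)$'' is meaningful; this is the standard ZK/WI-with-unbounded-simulator convention, and it is why the two directions of the equivalence can be matched up. Note also that taking $V^* = V$ recovers \cref{fact:WIZK} as a special case, so nothing extra is needed there. The only mild subtlety is tracking the constant in $\wieps$ across the two directions (the backward implication loses a small constant factor relative to a clean $\wieps$), which is precisely why the equivalence is phrased with an $O(\cdot)$ on the simulation error.
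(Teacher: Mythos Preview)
Your proposal is correct and matches the paper's approach exactly: the paper simply states that the proof of \cref{fact:mvWIZK} is completely analogous to that of \cref{fact:WIZK}, and you have spelled out precisely that analogy, carrying the quantifier over $V^*$ through both directions and using the canonical-witness simulator for the forward direction and the triangle inequality for the backward direction.
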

The proof of \cref{fact:mvWIZK} is completely analogous to that of \cref{fact:WIZK}

\section{Fundamental Results on QSWI}
The main goal of this section is to prove that $\hvQSWI = \QSWI$.
The strategy is simple:
\cref{fact:WIZK,fact:mvWIZK} are fundamental facts about witness indistinguishability that also apply in the classical setting, not just the quantum setting.
However, they are especially useful for QSWI due to Kobayashi's \cite{Kob08} direct transformations for QSZK proofs.
In particular, we observe that Kobayashi's transformations preserve the prover's efficiency, and they all work with unbounded zero-knowledge simulators (e.g., neither the prover nor the verifier have to run the simulator).
This means that those transformations can be applied to our ``efficient prover'' and ``unbounded simulator'' setting, and thus they give analogous results for QSWI.
The only thing we have to work harder for is keeping track of the witness indistinguishability error (i.e., the simulator error).
In total, we prove the following:
\begin{enumerate}
    \item Any honest-verifier QSWI proof can be compiled into an honest-verifier QSWI proof with $3$ messages ($2$ rounds).
    \item Any honest-verifier QSWI proof with $3$ messages can be compiled into an honest-verifier \emph{public-coin} QSWI proof with $3$ messages, in which the verifier sends only $1$ classical (uniform) bit.
    \item Any honest-verifier public-coin QSWI proof with $3$ messages in which the verifier sends only $1$ classical (uniform) bit, is also malicious-verifier QSWI.
\end{enumerate}
The remainder of this section is dedicated to proving those transformations while also keeping track of the WI error.
It turns out the WI error in each transformation blows up by at most a polynomial factor.
This will allow us to use the transformation in \cref{sec:batch-to-qswi}.

\subsection{Round Compression for $\hvQSWI$}

\begin{lemma}\label{lem:round-compr}
    For any polynomial $m\deq m(n)$ and any $\ceps,\seps,\wieps \in [0,1]$ such that $m \geq 4$ and $\ceps < \frac{(1-\seps)^2}{16(m+1)^2}$.
    Then 
    \[\hvQSWI\left(m,1-\ceps,\seps,\wieps\right) \subseteq \hvQSWI\left(3,1-\frac{\ceps}{2},1-\frac{(1-\seps)^2}{32(m+1)^2},m\wieps\right).\]
\end{lemma}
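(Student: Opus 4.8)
The plan is to apply the Kitaev--Watrous round-compression construction \cite{KW00} and Kobayashi's analysis of it \cite{Kob08}, while tracking two things Kobayashi did not need for $\QSZK$: that the honest prover stays efficient, and that the witness-indistinguishability error degrades by at most a factor of $m$. Concretely, given an $m$-message QIP $(P,V)$ for $\calR$, form the $3$-message protocol $(P',V')$ in the usual way: on input $(x,w)$ the new honest prover prepares the history state $\frac{1}{\sqrt{m+1}}\sum_{i=0}^{m}\ket{i}_{\reg C}\otimes\ket{\psi_i}$, where $\ket{\psi_i}$ is the global state of $\langle P(w),V\rangle(x)$ after the $i$-th message, and sends the clock register $\reg C$ together with the verifier and message registers (keeping only its private register $\reg P$); $V'$ then either checks that the final snapshot accepts or picks a uniformly random transition and, with one round of help from the prover, checks that the corresponding pair of consecutive snapshots is related by the appropriate circuit of $P$ or $V$. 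The relation $\calR$ is unchanged. Since the honest prover's snapshots are genuine it passes every transition check and the boundary check perfectly, so the only completeness loss comes from the acceptance check, giving completeness error at most $\ceps/2$ and preserving perfect completeness; the soundness bound $1-\frac{(1-\seps)^2}{32(m+1)^2}$ is exactly the Kitaev--Watrous bound. The hypotheses $m\ge 4$ and $\ceps<\frac{(1-\seps)^2}{16(m+1)^2}$ are precisely what make the resulting completeness--soundness gap positive.

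\emph{Prover efficiency.} The new honest prover only needs to prepare the history state and, in the third message, apply one correction circuit of $P$ or $V$. It prepares the history state by putting $\reg C$ in a uniform superposition over $\{0,\dots,m\}$ and running $P_1(x,w),V_1(x),P_2(x,w),\dots$ coherently, each controlled on $\reg C$ being sufficiently large. Since $(P,V)$ has an efficient prover (required by \cref{def:hvQSWI}) and $V(x)$ is polynomial-time and public, this takes $\poly(\abs{x}+\abs{w})$ time, so $(P',V')$ also has an efficient prover.

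\emph{Witness indistinguishability.} By \cref{fact:WIZK}, the hypothesis that $(P,V)$ is an $\hvQSWI\big(m,1-\ceps,\seps,\wieps\big)$ protocol yields a (possibly unbounded) simulator $S$ with $\trdist{\View_V(x,w,j)-S(x,j)}=O(\wieps)$ for all $(x,w)\in\calR$ and $j\in[k]$. Unwinding Kobayashi's proof that round compression preserves honest-verifier $\QSZK$ gives a recipe that turns $S$ into a simulator $S'$ for $(P',V')$: $S'$ builds a ``simulated history state'' by initializing the $i=0$ snapshot honestly and then, step by step, either applying the witness-independent circuit $V_r(x)$ or substituting the simulator output for the prover's snapshot, using Uhlmann's theorem to keep successive snapshots approximately consistent. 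Two observations let us reuse this as is: the recipe never requires $S$ to be efficient and never has $P'$ or $V'$ run $S$ (only the unbounded $S'$ calls $S$), so it is valid in the unbounded-simulator setting; and, inspecting the error analysis, the simulated history state differs from the real one in at most its $m+1$ time-snapshots, each perturbed by $O(\wieps)$, which against the $\frac{1}{m+1}$ normalization of the history state accumulates to at most $O(m\wieps)$ in trace distance for each of the two views $j'\in\{1,2\}$ of $(P',V')$. Feeding $S'$ back into \cref{fact:WIZK} shows $(P',V')$ is an $\hvQSWI\big(3,1-\tfrac{\ceps}{2},1-\tfrac{(1-\seps)^2}{32(m+1)^2},m\wieps\big)$ protocol, as claimed.

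\emph{The main obstacle.} Everything except the preceding paragraph's bookkeeping is a transcription of \cite{KW00,Kob08}. The delicate point is that the verifier's view of the compressed protocol carries the off-diagonal ``time-coherence'' terms $\ketbra{i}{j}\otimes\Tr_{\reg P}\!\big[\ketbra{\psi_i}{\psi_j}\big]$ of the history state, which have no classical analogue and are \emph{not} individually controlled by witness indistinguishability of $(P,V)$ --- that hypothesis bounds only the diagonal terms, i.e.\ the genuine round-views $\View_V(x,w,\cdot)$. Handling them is exactly what Kobayashi's construction is designed to do: one uses that the verifier's transition circuits are witness-independent, so switching from $w_0$ to $w_1$ perturbs only the prover-side transitions, and these perturbations, once reduced to the verifier and message registers, are re-expressed in terms of the $\wieps$-close round-views of $(P,V)$. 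Our only extra work is to confirm that this argument is oblivious to the (un)boundedness of the simulator and that the blow-up is the claimed factor $m$ (absorbing constants into the $O(\cdot)$ of \cref{fact:WIZK}).
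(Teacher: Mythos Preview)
Your protocol is not the Kitaev--Watrous/Kobayashi protocol the paper uses, and this difference is exactly where your argument breaks down. In the actual construction the honest prover prepares \emph{separate} copies of the intermediate states: for each round $j$ it independently prepares $\ket{\psi_j}$ in fresh registers $(\reg{V}_j,\reg{M}_j,\reg{P}_j)$ and sends all the $(\reg{V}_j,\reg{M}_j)$ as a tensor product. There is no clock register and no superposed history state. Consequently the verifier's first-round view is literally $\bigotimes_j \View_V(x,w,j-1)$, and the simulator $S'$ just outputs $\bigotimes_j S(x,j-1)$; subadditivity of trace distance over tensor products immediately gives the $O(m\wieps)$ bound, with no off-diagonal terms ever appearing.

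Your version, with the prover sending $\Tr_{\reg P}\bigl[\ketbra{\psi}{\psi}\bigr]$ for $\ket{\psi}=\frac{1}{\sqrt{m+1}}\sum_i\ket{i}\ket{\psi_i}$, genuinely does contain the cross terms $\ketbra{i}{j}\otimes\Tr_{\reg P}[\ketbra{\psi_i}{\psi_j}]$, and these are \emph{not} controlled by the WI hypothesis, which only bounds the diagonal blocks. You correctly flag this as the main obstacle, but your resolution is a misattribution: Kobayashi's construction is not ``designed to'' handle these coherences --- it never encounters them, because it uses the tensor-product protocol. The Uhlmann-style argument you sketch (witness-independent verifier transitions, perturbations ``re-expressed'' via round-views) does not go through as stated: the Uhlmann unitaries that align $\ket{\psi_i(w_0)}$ with $\ket{\psi_i(w_1)}$ on $\reg P$ depend on $i$, so they cannot be applied coherently across the clock, and there is no evident way to bound the off-diagonal discrepancy by $O(m\wieps)$. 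Switching to the separate-snapshot protocol removes the obstacle entirely and makes the WI analysis a two-line computation.
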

\begin{proof}
    The proof is nearly identical to \cite[Lemma 17]{Kob08} aside from the analysis of witness indistinguishability. We reiterate the protocol and argument for completeness to provide context for our analysis of WI error.

    Let $\calL = \{\calL_\text{yes}, \calL_\text{no}\}$ be an $\NP$ language in $\hvQSWI\left(m,1-\ceps,\seps,\wieps\right)$ with honest verifier $V$ and honest prover $P$. Without loss of generality, assume $m$ is even, since we can modify any protocol with odd $m$ to have an even number of messages by having the verifier send a ``dummy" first message. 
    Let quantum registers $\reg{V}$, $\reg{M}$, and $\reg{P}$ be the verifier's private register, the message channel register, and the prover's private register acting on Hilbert spaces $\calV$, $\calM$, and $\calP$ respectively.
    For $1 \le j \le \frac{m}{2}+1$, let $V_j$ be the $j$th transformation applied by the honest verifier $V$ to registers $(\reg{V}, \reg{M})$. 
    For $1 \le j \le \frac{m}{2}$ let $P_j$ be the $j$th transformation applied by the honest prover $P$ to registers $(\reg{M}, \reg{P})$. 
    Let $\Pi = \{\Pi_{\text{acc}}, \Pi_{\text{rej}}\}$ be the final measurement the verifier performs to decide acceptance and rejection. 
    For $1 \le j \le \frac{m}{2}+1$, let $\ket{\psi_j}$ be the quantum state in $(\reg{V}, \reg{M}, \reg{P})$ just before $V$ applies $V_j$ (so $\ket{\psi_1} = \ket{0_{\calV \otimes \calM \otimes \calP}}$).

    We construct a three-message $\hvQSWI$ protocol for $\calL$ in \cref{fig:hv m to hv 3} using the same approach that Watrous and Kitaev used to show that all quantum interactive proofs only require three messages \cite{KW00}. Informally, our new honest prover $P'$ will send our new honest verifier $V'$ every intermediate state that the old verifier $V$ would have seen when interacting with the old verifier $P$. Then, using only one round of interaction, $V'$ will either check that the final state is accepting or that a random transition is valid. 

    \newprotocol{Honest Verifier's Three-Message Protocol}
    {an instance $x$.}
    {
    \item Receive registers $(\reg{V}_j, \reg{M}_j)$ for $2 \le j \le \frac{m}{2} + 1$ from the prover.
    
    \item Prepare $\ket{0}$ in registers $(\reg{V}_1, \reg{M}_1)$ and $\ket{\Phi^+} = \frac{1}{\sqrt{2}}(\ket{0}\ket{0} + \ket{1}\ket{1})$ in single qubit registers $(\reg{X}, \reg{Y})$. 
    
    Choose $r \in \{1, \dots, \frac{m}{2}\}$ uniformly at random and apply $V_r$ to the qubits in $(\reg{V}_r, \reg{M}_r)$.
    
    Send $r$, $\reg{M}_r$, and $\reg{Y}$ to the prover.

    \item Receive registers $\reg{M}_r$ and $\reg{Y}$ from the prover. Choose $b \in \{0,1\}$ uniformly at random.

    \begin{tabular}{@{}lp{5in}@{}}
        If $b = 0$: & Controlled on $\reg{X}$, perform a controlled-swap between $(\reg{V}_r, \reg{M}_r)$ and $(\reg{V}_{r+1}, \reg{M}_{r+1})$ and a controlled-not on $\reg{Y}$. Apply Hadamard to $X$. Accept if $\reg{X}$ contains $\ket{0}$ and reject otherwise. \\
        If $b = 1$: & Apply $V_{\frac{m}{2} + 1}$ to $(\reg{V}_{\frac{m}{2} + 1}, \reg{M}_{\frac{m}{2} + 1})$. Accept if $(\reg{V}_{\frac{m}{2} + 1}, \reg{M}_{\frac{m}{2} + 1})$ contains an accepting state of the original protocol, and reject otherwise.
    \end{tabular}
    }
    {Honest verifier's three-message protocol}
    {fig:hv m to hv 3}

    To show completeness, suppose $x \in \calL_\text{yes}$ and the honest prover $P'$ is provided some witness $w$ such that $(x,w) \in \R$, where $\R$ is some $\NP$ relation for $\calL$. $P'$ can then use that witness $w$ to simulate the $m$-message interaction between $P$ and $V$ to prepare $\ket{\psi_j}$ in $(\reg{V}_j, \reg{M}_j, \reg{P}_j)$ and send all $(\reg{V}_j, \reg{M}_j)$ for $2 \le j \le \frac{m}{2} + 1$ as its first message to $V'$. 
    $P'$ also prepares $\ket{0}$ in $\reg{P_1}$.
    After $P'$ receives $r$, $\reg{M}_r$, and $\reg{Y}$, $P'$ applies $P_r$ to $(\reg{M}_r, \reg{P}_r)$ and performs a controlled-swap between $\reg{P}_r$ and $\reg{P}_{r+1}$ with $\reg{Y}$ as the control. 
    Finally, $P'$ sends $\reg{M}_r$ and $\reg{Y}$ back to $V'$. 
    All these transformations are efficient because the transformations performed by the original verifier and prover in the $m$-message $\hvQSWI$ protocol were efficient.
    
    If $b=0$, the protocol accepts with probability $1$, because $P_r V_r \ket{\psi_r} = \ket{\psi_{r+1}}$ for all $r \in \{1, \dots, \frac{m}{2}\}$, implying that $(\reg{X}, \reg{Y})$ are returned to their original state $\ket{\Psi^+}$ after the two controlled-swaps (one by the prover and one by the verifier).
    If $b=1$, the protocol accepts with probability at least $1 - \e$, since the final state of the original $m$-message protocol was accepted with probability at least $1-\e$.
    Thus, $V'$ accepts all $x \in \calL_\text{yes}$ with probability at least $1 - \frac{\e}{2}$.
    
    The argument from the proof of \cite[Lemma 17]{Kob08} for soundness being $1-\frac{(1-\seps)^2}{32(m+1)^2}$ carries through without modification.

    Finally, all that is left is witness indistinguishability. 
    By \cref{fact:WIZK}, we just need to construct an unbounded simulator $S'$ of the view of the verifier $V'$ using the fact that there exists an unbounded simulator $S$ of the view of $V$. We can construct each output of $S'$ as follows:
    \begin{itemize}
        \item Let $S'(x,0) = \ket{0_{\calV \otimes \calM}}\bra{0_{\calV \otimes \calM}}$.

        \item Let $S'(x,1)$ be the contents of $(\reg{V}_2, \reg{M}_2, \dots, \reg{V}_{\frac{m}{2} + 1}, \reg{M}_{\frac{m}{2} + 1})$ when we use $S$ to prepare $S(x,j-1)$ in $(\reg{V}_j, \reg{M}_j)$ for each $2 \le j \le \frac{m}{2} + 1$.

        \item Let $S'(x,2)$ be the contents of $(\reg{R}, \reg{X}, \reg{Y}, \reg{V}_2, \reg{M}_2, \dots, \reg{V}_{\frac{m}{2} + 1}, \reg{M}_{\frac{m}{2} + 1})$ when we prepare a uniformly random $r \in \{1, \dots, \frac{m}{2}\}$ in register $\reg{R}$, prepare $\ket{\Phi^+}$ in $(\reg{X}, \reg{Y})$, and then use $S$ to prepare $S(x,r)$ in $(\reg{V}_r, \reg{M}_r)$ and $S(x,j-1)$ in $(\reg{V}_j, \reg{M}_j)$ for each integer $j \in [2, \frac{m}{2} + 1] \setminus \{r\}$. 
    \end{itemize}
    Let $w$ be an arbitrary witness for $x$. Clearly, $\trdist{\View_{V'}(x,w,0) - S'(x,0)} = 0$. By the subadditivity of trace distance for tensor products (for all $\rho_1,\rho_2,\sigma_1, \sigma_2$,  we have $\trdist{\rho_1 \otimes \sigma_1 - \rho_2 \otimes \sigma_2} \le \trdist{\rho_1 - \rho_2} + \trdist{\sigma_1 - \sigma_2}$), and the fact that $S$ has a WI error of $\wieps$, we get that
    \[ \trdist{\View_{V'}(x,w,1) - S'(x,1)} \leq \sum_{j=2}^{\frac{m}{2} + 1} \trdist{\View_{V}(x,w,j) - S(x,j)} = O(m\wieps). \]
    Since $\reg{R}$, $\reg{X}$, and $\reg{Y}$ are prepared in $S'(x,2)$ to exactly match the states they take on in $\View_{V'}(x,w,2)$, then similar to before, we get the following by subadditivity of trace distance
    \[ \trdist{\View_{V'}(x,w,2) - S'(x,2)} \leq O(m\wieps). \]
\end{proof}

\begin{lemma}\label{lem:par-rep}
    For any $\ceps,\seps,\wieps \in [0,1]$ such that $1-\ceps > \seps$ and any polynomial $c\deq c(n) \geq 1$.
    Then 
    \[\hvQSWI\left(3,1-\ceps,\seps,\wieps\right) \subseteq \hvQSWI\left(3,\Big(1-\frac{\ceps}{2}\Big)^c,\seps^c,c\cdot\wieps\right).\]
\end{lemma}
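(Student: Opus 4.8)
The plan is to prove \cref{lem:par-rep} by $c$-fold parallel repetition, following (essentially verbatim) the corresponding parallel-repetition lemma for $3$-message protocols of Kobayashi~\cite{Kob08} (which itself builds on Kitaev--Watrous~\cite{KW00}); exactly as in \cref{lem:round-compr}, the only ingredient that needs a genuinely new argument is the witness-indistinguishability error. Concretely, from the given $\hvQSWI(3,1-\ceps,\seps,\wieps)$ proof $(P,V)$ for an $\NP$ relation $\calR$ I would build $(P',V')$ that runs $c$ independent copies of $(P,V)$ with no cross-talk: $P'(x,w)$ keeps a fresh private register per copy, runs $P(x,w)$ in each, and in every round sends the concatenation of the $c$ per-copy messages; $V'(x)$ runs $c$ independent copies of $V(x)$ and accepts iff all $c$ of them accept. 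The routine checks are that $(P',V')$ is still a $3$-message QIP; that, since $c=c(n)$ is a polynomial, $P'$ is efficient whenever $P$ is, so $(P',V')$ has an efficient prover; and that perfect completeness, if present, is preserved.

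For completeness and soundness I would import the analysis of \cite{Kob08}: because the repetition must not inflate the message count, sequential repetition is off the table, and one instead relies on the (non-trivial) fact that the optimal acceptance probability of a $3$-message QIP is multiplicative under parallel composition. This yields soundness error $\seps^c$ and completeness at least $(1-\ceps/2)^c$, the slack factor $\tfrac12$ being inherited from Kobayashi's construction of the repeated verifier. I would also record the parameter bookkeeping that makes this lemma useful downstream: combined with \cref{lem:round-compr} (whose hypothesis $\ceps<\tfrac{(1-\seps)^2}{16(m+1)^2}$ is calibrated for exactly this), a suitable polynomial choice of $c$ drives the soundness below $1/3$ while keeping completeness above $2/3$.

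For witness indistinguishability --- the part not covered by \cite{Kob08} --- I would observe that, since the $c$ copies of $(P,V)$ inside $(P',V')$ share nothing, for every $(x,w)\in\calR$ and every round $j$ the verifier's view is a tensor power, $\View_{V'}(x,w,j)=\View_{V}(x,w,j)^{\otimes c}$. Iterating the subadditivity of trace distance for tensor products (as already used in the proof of \cref{lem:round-compr}), for any $x$ and any $w_0,w_1\in\calR(x)$ this gives
\[ \trdist{\View_{V'}(x,w_0,j)-\View_{V'}(x,w_1,j)} \;\le\; c\cdot\trdist{\View_{V}(x,w_0,j)-\View_{V}(x,w_1,j)} \;\le\; c\cdot\wieps, \]
so $(P',V')$ is a $\hvQSWI\!\big(3,(1-\ceps/2)^c,\seps^c,c\cdot\wieps\big)$ proof. (Equivalently, one could take $S'(x,j)\deq S(x,j)^{\otimes c}$ as a simulator for $(P',V')$ and conclude via \cref{fact:WIZK}.) The main obstacle is therefore not in our part of the argument at all: it is the soundness of parallel repetition for $3$-message quantum interactive proofs \emph{without} spending extra rounds, which is precisely the content we borrow from \cite{Kob08,KW00}; the WI bookkeeping on top of it is immediate from trace-distance subadditivity across the independent copies.
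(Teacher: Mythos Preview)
Your proposal is correct and matches the paper's approach essentially verbatim: $c$-fold parallel repetition of the $3$-message protocol, completeness by independence, soundness via the Kitaev--Watrous parallel-repetition theorem for $\QIP(3)$, and WI error $c\cdot\wieps$ from the tensor-power structure of the view. The only cosmetic difference is that the paper carries out the WI bound via the simulator characterization of \cref{fact:WIZK} (taking $S'(x,j)=S(x,j)^{\otimes c}$ and a hybrid argument), which you mention as your alternative route; your direct bound on $\trdist{\View_V(x,w_0,j)^{\otimes c}-\View_V(x,w_1,j)^{\otimes c}}$ is equivalent.
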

\begin{proof}
    Fix an $\NP$ relation $\calR$ and its corresponding language $\calL$.
    Let $(P,V)$ be a $\hvQSWI\left(3,1-\ceps,\seps,\wieps\right)$ protocol for $\calR$.
    Consider the protocol $(P',V')$ which runs $c$ independent copies of $(P,V)$, and $V'$ accepts if and only if all copies accept.
    Fix any $(x,w) \in \calR$. The probability that the new protocol accepts when run on $(x,w)$ is at least $(1-{\ceps}/{2})^c$ by the independence of the $c$ copies.
    Fix any $x \notin \calL$. By standard QIP parallel repetition techniques (e.g., see \cite[Theorem 6]{KW00}), the probability that the verifier accepts is at most $\seps^c$.

\noindent
    We move on to proving that $(P',V')$ has WI error $k \wieps$.
    Note that by definition, for all $x,w,j$
    \[ \View_{V'}(x,w,j) = \View_{V}(x,w,j)^{\otimes c}. \]
    By \cref{fact:WIZK}, there exists an unbounded simulator $S$ for $(P,V)$. Consider the following simulator $S'$ for $(P',V')$: on input $x,j$, simply runs $S(x,j)$ a total of $c$ independent times and outputs whatever they output.
    That is, the distribution $S'(x,j)$ is simply $S(x,j)^{\otimes c}$.
    We prove that
    $\trdist{\View_{V'}(x,w,j) - S'(x,j)} \leq c \wieps$ by a standard hybrid argument.
    Fix $(x,w) \in \calR$ and $j \in [k]$.
    Consider the following sequence of distributions:
    \[ H_i = S(x,j)^{\otimes i} \circ \View_{V}(x,w,j)^{\otimes c-i}. \]
    Clearly, $H_0 = \View_{V'}(x,w,j)$ and $H_k = S'(x,j)$.
    For any $i \in [k]$,
    by 
    the fact that $H_i$ and $H_{i-1}$ differ only in coordinate $i$ and the fact that trace distance is preserved under tensoring with fixed states, we have
    \[\trdist{H_i - H_{i-1}} = \trdist{\View_{V}(x,w,j) - S(x,j)}.\]
    By the guarantee on $S(x,j)$, we have
    \[\trdist{\View_{V}(x,w,j) - S(x,j)} \leq \wieps.\]
    By the triangle inequality $\trdist{H_k - H_{0}} \leq c\cdot \wieps$, and the claim follows.
\end{proof}

\begin{lemma}\label{lem:3round-negl-error}
    There exists a polynomial $\poly(\cdot,\cdot)$ such that for any $\wieps \in [0,1]$ and any polynomials $m\deq m(n)$ and $p\deq p(n)$ such that $\poly(m,p)\cdot\wieps < 1$, we have that 
    \[\hvQSWI\left(m,2/3,1/3,\wieps\right) \subseteq \hvQSWI\left(3,1-2^{-\Omega(p)},2^{-\Omega(p)},\poly(m,p)\cdot\wieps\right).\]
    Specifically, we have $\hvQSWI \subseteq \hvQSWI\left(3,1-2^{-p},2^{-p},\negl(n)\right)$.
\end{lemma}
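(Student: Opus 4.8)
\emph{Approach.}
I would compose three transformations, each of which preserves the honest prover's efficiency and --- through the simulator characterization of \cref{fact:WIZK} --- honest-verifier witness indistinguishability, at the cost of only a polynomial blow-up of the WI error: (i) amplify the given $m$-message protocol so that its completeness and soundness errors both become exponentially small while it stays $m$-message; (ii) apply the round compression of \cref{lem:round-compr} to bring it down to $3$ messages; (iii) apply the $3$-message parallel repetition of \cref{lem:par-rep} to undo the soundness degradation caused by (ii). All three steps are needed because of a circular tension: \cref{lem:round-compr} always outputs soundness error $1-\frac{(1-\seps)^2}{32(m+1)^2}=1-\Theta(1/m^2)$, only $1/\poly(m)$ below trivial, so it must be followed by repetition; but \cref{lem:par-rep} turns completeness error $\ceps$ into $\Theta(c\ceps)$, so the completeness error fed into it must already be far below $1/\poly(m,p)$, which is why (i) must be strong. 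I will use $c_1=\Theta(p+\log m)$ copies in (i) and $c_2=\Theta(pm^2)$ copies in (iii), with suitably large implied constants.

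\emph{Step (i).}
Run $c_1$ independent parallel copies of the $\hvQSWI(m,2/3,1/3,\wieps)$ protocol, with the new verifier accepting iff at least half the copies accept. A Chernoff bound gives completeness error $2^{-\Omega(c_1)}$, and standard soundness amplification for quantum interactive proofs --- parallel repetition together with a Chernoff bound (see \cite[Theorem 6]{KW00}) --- gives soundness error $2^{-\Omega(c_1)}$ against every prover; the honest prover stays efficient since it just runs $c_1$ copies of $P$. For witness indistinguishability, the new verifier's view at any round is the $c_1$-fold tensor product of the original verifier's view on the same input, so by subadditivity of trace distance for tensor products its view on $w_0$ and its view on $w_1$ differ by at most $c_1\wieps$ in trace distance, giving an $\hvQSWI(m,1-2^{-\Omega(c_1)},2^{-\Omega(c_1)},c_1\wieps)$ protocol. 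This step is precisely where a generic completeness-error elimination would be fatal, since such an elimination is not known to preserve prover efficiency, whereas threshold repetition does preserve it --- at the price of a residual completeness error, which is why the final bound has completeness error $2^{-\Omega(p)}$ rather than $0$.

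\emph{Steps (ii) and (iii).}
After padding to an even $m\ge 4$, one checks that for $c_1$ at least $C\log m$ (a suitable constant $C$) the requirement $2^{-\Omega(c_1)}<\frac{(1-2^{-\Omega(c_1)})^2}{16(m+1)^2}$ of \cref{lem:round-compr} holds; applying it yields a $3$-message protocol with completeness error $2^{-\Omega(c_1)}$, soundness error $1-\Theta(1/m^2)$, and WI error $O(mc_1)\wieps$, whose completeness probability still exceeds its soundness error, so \cref{lem:par-rep} applies. Running $c_2$ parallel copies there makes the soundness error $(1-\Theta(1/m^2))^{c_2}\le 2^{-\Omega(c_2/m^2)}=2^{-\Omega(p)}$, the completeness error at most $c_2\cdot 2^{-\Omega(c_1)}$, and the WI error $O(mc_1c_2)\wieps$. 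Taking the implied constant in $c_1=\Theta(p+\log m)$ large enough forces $c_2\cdot 2^{-\Omega(c_1)}=\Theta(pm^2)\cdot 2^{-\Omega(p+\log m)}\le 2^{-\Omega(p)}$, and then $O(mc_1c_2)\wieps=\poly(m,p)\cdot\wieps$; the hypothesis $\poly(m,p)\wieps<1$ keeps every intermediate WI error below $1$, so no bound is vacuous. This establishes the containment, and the ``Specifically'' clause follows by applying it with $m=\poly(n)$ and $\wieps=\negl(n)$ --- a polynomial times a negligible function is negligible, so the hypothesis holds and the output WI error is again negligible --- and with a suitably larger polynomial in place of $p$ to absorb the constant hidden in $\Omega(\cdot)$.

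\emph{Main obstacle.}
Conceptually the lemma is just the composition of an amplification step, \cref{lem:round-compr}, and \cref{lem:par-rep}, so the genuine work is (a) arguing that the amplification of step (i) can be carried out without sacrificing prover efficiency --- which is what rules out an off-the-shelf completeness-error elimination and commits us to the threshold variant --- and (b) the bookkeeping that threads completeness error, soundness error, and WI error simultaneously through the circular tension above: checking that the single free parameter $c_1=\Theta(p+\log m)$ can be made large enough to control the completeness error after step (iii), while $c_2=\Theta(pm^2)$ handles soundness and the cumulative multiplicative WI loss $c_1\cdot m\cdot c_2$ remains polynomial in $m$ and $p$. I expect (b) to be the only part that requires real care.
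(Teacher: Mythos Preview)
Your three-step outline (amplify, then compress via \cref{lem:round-compr}, then repeat via \cref{lem:par-rep}) and the accompanying bookkeeping of completeness, soundness, and WI error match the paper's proof in structure. The one substantive divergence is in step~(i): the paper amplifies by \emph{sequential} repetition with majority vote over $p'=O(p^2)$ executions, which raises the message count to $m\cdot p'$ but makes soundness immediate---sequential executions of a QIP are genuinely independent from the verifier's point of view, so the classical Chernoff bound applies directly to the number of accepting runs. You instead propose \emph{parallel} majority repetition, keeping the message count at $m$.

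That choice introduces a gap. The result you cite, \cite[Theorem~6]{KW00}, is an \emph{all-accept} parallel-repetition statement for three-message QIP; it does not give a Chernoff-type bound on the number of accepting copies against an entangling prover in an $m$-message protocol. The natural bridge from all-accept to a threshold rule is the union bound over size-$t$ accepting subsets, which yields soundness at most $\binom{c_1}{t}\seps^{t}$ (granting $m$-message all-accept repetition, itself not literally \cite[Theorem~6]{KW00}); with $\seps=1/3$ and your threshold $t=c_1/2$ this is $\approx(4/3)^{c_1/2}$, which is useless. The repair is easy---either raise the threshold to some $\alpha c_1$ with $\alpha$ strictly between roughly $0.63$ and $2/3$ (so that $H(\alpha)<\alpha\log_2 3$, while completeness still follows by Chernoff on the honest prover's independent copies), or simply switch to sequential repetition as the paper does, paying an extra $p'$ factor in the message count that \cref{lem:round-compr} then absorbs into the polynomial WI blow-up anyway.
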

\begin{proof}
    Fix an $\NP$ relation $\calR$ and its corresponding language $\calL$.
    Let $(P,V)$ be a $\hvQSWI\left(m,2/3,1/3,\wieps\right)$ protocol for $\calR$.
    Let $(P',V')$ be the protocol that we get by sequentially repeating $(P,V)$ a total of $p'=O(p^2)$ times, and taking the majority vote.
    This is a $\hvQSWI\left(m\cdot p',1-2^{-p^2},2^{-p^2},p'\cdot\wieps\right)$ protocol for $\calR$, and
    completeness, soundness and WI all follow from a similar argument to that of \cref{lem:par-rep}.
    Denote $M = 32(m\cdot p '+1)^2$. By \cref{lem:round-compr}, there exists a $\hvQSWI\left(3,1-2^{-p^2-1},1-\frac{1-2^{-p^2}}{M},mp'^2\cdot\wieps\right)$ protocol for $\calR$.
    Finally, by applying \cref{lem:par-rep} with $c = p M$ and setting $\poly(m,p) = p M m p'^2$, we get a $\hvQSWI\left(3,1-2^{-\Omega(p)},2^{-\Omega(p)},\poly(m,p)\cdot\wieps\right)$ protocol for $\calR$ and the claim follows.
\end{proof}

\subsection{Private-Coin to Public-Coin for $\hvQSWI$}
\begin{lemma}\label{lem:priv-to-pub}
    For any $\ceps,\seps,\wieps \in [0,1]$
    such that $\seps < (1-\ceps)^2$, and any $\calL \in \hvQSWI(3,1-\ceps,\seps,\wieps)$,
    there exists
    a $\hvQSWI\left(3,1-\frac{\ceps}{2},\frac{1}{2} + \frac{\sqrt{\seps}}{2},\wieps\right)$ protocol for $\calL$, where the honest verifier's message consists of a single uniform classical bit.
\end{lemma}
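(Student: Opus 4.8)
The plan is to instantiate, for $3$-message QIPs, the private-coin to public-coin transformation of Marriott and Watrous \cite{MW05} in the exact form Kobayashi \cite{Kob08} uses for $\QSZK$, and then to verify that it preserves honest-verifier witness indistinguishability with no loss in the WI error. Write the given $3$-message protocol $(P,V)$ for $\R$ as follows: the prover applies $P_1$ to $(\reg{M},\reg{P})$ and sends $\reg{M}$; the verifier applies $V_1$ to $(\reg{V},\reg{M})$ and sends $\reg{M}$; the prover applies $P_2$ to $(\reg{M},\reg{P})$ and sends $\reg{M}$; the verifier applies $V_2$ to $(\reg{V},\reg{M})$ and measures its accept/reject projectors $\{\Pi_{\text{acc}},\Pi_{\text{rej}}\}$. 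Let $\ket{\psi_2}$ and $\ket{\psi_3}$ be the global states on $(\reg{V},\reg{M},\reg{P})$ right after $V_1$ and right after $P_2$, respectively. The new public-coin protocol $(P',V')$ is: (1) $P'$ prepares $\ket{\psi_2}$ — efficiently, since it knows the public circuits $P_1,V_1$ and holds a witness — and sends $\reg{V}$ to $V'$, keeping $(\reg{M},\reg{P})$; (2) $V'$ sends a single uniform bit $b$; (3) if $b=1$, then $P'$ applies $P_2$ to $(\reg{M},\reg{P})$ and sends $\reg{M}$, and $V'$ applies $V_2$ to $(\reg{V},\reg{M})$ and accepts iff it measures $\Pi_{\text{acc}}$; if $b=0$, then $P'$ sends $\reg{M}$ unchanged and $V'$ applies $V_1^{\dagger}$ to $(\reg{V},\reg{M})$ and accepts iff $\reg{V}$ holds $\ket{0}$. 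Both parties are efficient and $V'$'s only message is one uniform classical bit.

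Completeness and soundness are exactly as in \cite{MW05,Kob08}, and I would just cite them. For $x$ in the language the honest $P'$ passes the $b=0$ branch with certainty (applying $V_1^{\dagger}$ to $\ket{\psi_2}$ restores $\ket{0}_{\reg{V}}$) and passes the $b=1$ branch with probability $\geq 1-\ceps$ (that branch reproduces the original protocol's final test), so the completeness error is $\ceps/2$. Soundness is the Marriott--Watrous observation that a cheating prover commits to $\reg{V}$ in round $1$ and, after learning $b$, cannot make both ``undo $V_1$ and be clean'' and ``apply $V_2$ and be accepted'' hold with high probability unless the original protocol fails to be sound; this yields soundness error $\tfrac12+\tfrac{\sqrt{\seps}}{2}$. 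The hypothesis $\seps<(1-\ceps)^2$ is precisely what makes $\tfrac12+\tfrac{\sqrt{\seps}}{2}$ smaller than the completeness acceptance probability $1-\ceps/2$, so the resulting protocol is nontrivial.

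The remaining and genuinely new point is witness indistinguishability, which works because the transformation is ``local'': $V'$'s state in each round is a fixed, witness-independent channel (possibly mixed with the public coin) applied to $V$'s views. Since $V_1$ acts only on $(\reg{V},\reg{M})$ it commutes with $\tr_{\reg{P}}$; hence after round $1$ the verifier holds $\tr_{\reg{M}}\!\big(V_1\,\View_{V}(x,w,1)\,V_1^{\dagger}\big)$, and after round $2$ it holds
\[ \tfrac12\,\ketbra{0}{0}_{b}\otimes V_1\,\View_{V}(x,w,1)\,V_1^{\dagger} \;+\; \tfrac12\,\ketbra{1}{1}_{b}\otimes \View_{V}(x,w,2), \]
where I used that $\ket{\psi_3}$ is the post-$P_2$ state so its $(\reg{V},\reg{M})$-reduction is exactly $\View_{V}(x,w,2)$. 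Both expressions depend on the witness only through $\View_{V}(x,w,1)$ and $\View_{V}(x,w,2)$, and only through a channel. Therefore, for any $x$ and $w_0,w_1$ with $(x,w_0),(x,w_1)\in\R$, using that channels do not increase trace distance and that two block-diagonal states sharing a classical marginal have trace distance equal to the average of the branch-wise distances,
\[ \trdist{\View_{V'}(x,w_0,j)-\View_{V'}(x,w_1,j)} \;\leq\; \max_{j'\in\{1,2\}}\trdist{\View_{V}(x,w_0,j')-\View_{V}(x,w_1,j')} \;\leq\; \wieps \]
for $j\in\{1,2\}$. Alternatively one can phrase this through \cref{fact:WIZK} by setting $S'(x,1)=\tr_{\reg{M}}(V_1\,S(x,1)\,V_1^{\dagger})$ and $S'(x,2)=\tfrac12\ketbra{0}{0}_{b}\otimes V_1\,S(x,1)\,V_1^{\dagger}+\tfrac12\ketbra{1}{1}_{b}\otimes S(x,2)$, at the cost of an irrelevant constant. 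The crucial feature — and the reason the argument survives in the $\SWI$-style setting where Okamoto's classical techniques do not — is that $P'$ only runs $P$ and $V$ and never the simulator, so the honest prover stays efficient.

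I expect the only real obstacles to be bookkeeping. One must fix the register layout of the Marriott--Watrous protocol carefully enough that $V'$'s two views are genuinely of the stated form, in particular that the coin register is the only new classical side information and is uncorrelated with the witness; and one must re-verify the Marriott--Watrous soundness bound with the precise constants in the statement. Neither is conceptually hard, but the soundness estimate is the delicate part, and I would reproduce it essentially verbatim from \cite{Kob08}.
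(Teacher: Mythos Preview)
Your proposal is correct and follows essentially the same route as the paper: you instantiate the Marriott--Watrous compiler exactly as Kobayashi does, cite completeness and soundness, and then observe that the new verifier's views are fixed witness-independent channels applied to the original views. The paper phrases the WI step purely through the simulator (your ``alternatively'' paragraph), whereas you lead with the direct channel-monotonicity bound, but the content is identical and the WI error is preserved exactly as $\wieps$ in both presentations.
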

\begin{proof}
    The proof is nearly identical to \cite[Lemma 20]{Kob08} aside from the analysis of witness indistinguishability. We reiterate the protocol and argument for completeness to provide context for our analysis of WI error.

    Let $\calL = \{\calL_\text{yes}, \calL_\text{no}\}$ be an $\NP$ language in $\hvQSWI\left(3,1-\ceps,\seps,\wieps\right)$ with honest verifier $V$ and honest prover $P$.  
    Let quantum registers $\reg{V}$, $\reg{M}$, and $\reg{P}$ be the verifier's private register, the message channel register, and the prover's private register acting on Hilbert spaces $\calV$, $\calM$, and $\calP$ respectively.
    Let $P_1$ and $P_2$ be the honest prover's first and second transformations applied to $(\reg{M}, \reg{P})$, and let $V_1$ and $V_2$ be the verifier's first and second transformations applied to $(\reg{V}, \reg{M})$ between and after the prover's two transformations. After applying $V_2$, the verifier performs some measurement to decide whether to accept or reject.

    We construct a three-message public coin $\hvQSWI$ protocol for $\calL$ in \cref{fig:hv 3 to hv 3 pub} using the same approach that Marriott and Watrous used to compile three-message private-coin quantum interactive proofs into three-message public-coin quantum interactive proofs \cite{MW05}. Informally, our new honest prover $P'$ will send our new honest verifier $V'$ the state that the old verifier $V$ would have in its private register $\reg{V}$ while waiting for the old honest prover $P$ to apply $P_2$. Then, $V'$ will flip a coin to ask $P'$ for the state of $\reg{M}$ either before or after $P$ applied $P_2$. $V'$ would then be able to either check that the state before $P_2$ correctly rewinds to the initial state $\ket{0}$ or that the state after $P_2$ correctly transforms into an accepting final state.

    \newprotocol{Honest Verifier's Three-Message Public-Coin Protocol}
    {an instance $x$.}
    {
    \item Receive a register $\reg{V}$ from the prover.
    
    \item Choose $b \in \{0,1\}$ uniformly at random. Send $b$ to the prover. 
    
    \item Receive register $\reg{M}$ from the prover.

    \begin{tabular}{@{}lp{5in}@{}}
        If $b = 0$: & Apply $V_1^{-1}$ to $(\reg{V}, \reg{M})$. Accept if $\reg{V}$ contains $\ket{0}$ and reject otherwise. \\
        If $b = 1$: & Apply $V_2$ to $(\reg{V}, \reg{M})$. Accept if $(\reg{V}, \reg{M})$ contains an accepting state of the original protocol, and reject otherwise.
    \end{tabular}
    }
    {Honest verifier's three-message public-coin protocol}
    {fig:hv 3 to hv 3 pub}

    To show completeness, suppose $x \in \calL_\text{yes}$ and the honest prover $P'$ is provided some witness $w$ such that $(x,w) \in \R$, where $\R$ is some $\NP$ relation for $\calL$. $P'$ can then use that witness $w$ to simulate the interaction between $P$ and $V$ to prepare $V_1 P_1 \ket{0_{\calV \otimes \calM \otimes \calP}}$ in registers $(\reg{V}, \reg{M}, \reg{P})$. $P'$ then sends $\reg{V}$ as its first message to $V'$. 
    $P'$ then receives $b$. If $b=1$, then $P'$ applies $P_2$ to $(\reg{M}, \reg{P})$. $P$ then sends $\reg{M}$ to $V'$ as its final message.
    All these transformations are efficient because the transformations performed by the original verifier and prover in the original three-message public-coin $\hvQSWI$ protocol were efficient.
    
    If $b=0$, the protocol accepts with probability $1$, because $\tr_{\calM \otimes \calP} V_1^{-1} V_1 P_1 \ket{0_{\calV \otimes \calM \otimes \calP}} = \ket{0_{\calV}}$.
    If $b=1$, the protocol accepts with probability at least $1 - \e$, since the final state of the original $m$-message protocol was accepted with probability at least $1-\e$.
    Thus, $V'$ accepts all $x \in \calL_\text{yes}$ with probability at least $1 - \frac{\e}{2}$.
    
    The argument from the proof of \cite[Theorem 5.4]{MW05} for soundness being $\frac{1}{2} + \frac{\sqrt{\seps}}{2}$ carries through without modification.

    Finally, all that is left is witness indistinguishability. 
    By \cref{fact:WIZK}, we just need to construct an unbounded simulator $S'$ of the view of the verifier $V'$ using the fact that there exists an unbounded simulator $S$ of the view of $V$. We can construct each output of $S'$ as follows:
    \begin{itemize}
        \item Let $S'(x,0) = \ket{0_{\calV \otimes \calM}}\bra{0_{\calV \otimes \calM}}$.

        \item Let $S'(x,1) = \tr_{\calM} V_1 S(x,1) V_1^{-1}$.

        \item Let $S'(x,2) = \frac{1}{2} \ketbra{0}{0} \otimes S(x,2) + \frac{1}{2} \ketbra{1}{1} \otimes (V_1 S(x,1) V_1^{-1})$.
    \end{itemize}
    Note that for any arbitrary witness $w$ for $x$, 
    \begin{itemize}
        \item $\View_{V'}(x,w,0) = \ket{0_{\calV \otimes \calM}}\bra{0_{\calV \otimes \calM}}$.

        \item $\View_{V'}(x,w,1) = \tr_{\calM} V_1 \View_{V}(x,w,1) V_1^{-1}$.

        \item $\View_{V'}(x,w,2) = \frac{1}{2} \ketbra{0}{0} \otimes \View_{V}(x,w,2) + \frac{1}{2} \ketbra{1}{1} \otimes (V_1 \View_{V}(x,w,1) V_1^{-1})$.
    \end{itemize}
    Since $S$ has WI error $\wieps$, we conclude $S'$ has WI error $\wieps$.
\end{proof}

\subsection{Robustness to Malicious Verifiers}
\begin{lemma}\label{lem:malicious-verifier}
    For any $\hvQSWI(3,1-\ceps,\seps,\wieps)$ protocol where the verifier's message consists of a single classical bit, the protocol is actually malicious-verifier QSWI, i.e., it is a $\QSWI(3,1-\ceps,\seps,\wieps)$ protocol.
\end{lemma}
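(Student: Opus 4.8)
The plan is to invoke \cref{fact:mvWIZK}, which reduces the claim to the following: for every efficient malicious verifier $V^*$ there is a (possibly unbounded) simulator $S_{V^*}$ with $\trdist{\View_{V^*}(x,w,j) - S_{V^*}(x,j)} \le O(\wieps)$ for all $(x,w) \in \calR$ and all rounds $j$. The protocol, the honest prover, the honest verifier, and hence completeness, soundness, prover efficiency, and honest-verifier witness indistinguishability are all inherited verbatim from the given $\hvQSWI(3,1-\ceps,\seps,\wieps)$ protocol, so there is nothing further to check there; the entire argument is the quantum-rewinding argument Kobayashi \cite{Kob08} uses for malicious-verifier $\QSZK$ (which rests on Watrous' rewinding lemma \cite{Wat06}), with the $\QSZK$ simulator replaced by the honest-verifier $\QSWI$ simulator $S$ supplied by \cref{fact:WIZK}. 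The two things that must be verified are that (i) this construction uses only the \emph{state-theoretic} content of Watrous' lemma and therefore goes through with an \emph{unbounded} $S$ (the simulator need not be efficient), and (ii) the witness-indistinguishability error grows by at most a constant factor (plus a negligible additive term from the rewinding), so that \cref{fact:mvWIZK} yields the stated error.

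Set up notation as in \cref{fig:hv 3 to hv 3 pub}: the prover sends a register $\reg{V}$, the verifier replies with one uniform classical bit, the prover sends back $\reg{M}$, and the honest prover's second-message operation is the controlled map ``conditioned on the received bit register, apply $P_2$ to $(\reg{M},\reg{P})$''. A malicious $V^*$ does nothing before the first message; it receives $\reg{V}$, applies a unitary on $(\reg{V},\reg{Z})$ for a private workspace $\reg{Z}$, hands over a (possibly superposed or entangled) one-qubit register $\reg{B}$, and receives $\reg{M}$. For the round immediately after the prover's first message, $\View_{V^*}(x,w,\cdot)$ is just the first message on $\reg{V}$ tensored with $V^*$'s untouched initial state, which is within $\wieps$ of $S(x,1)$ tensored with that state; this round is immediate. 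For the round after the prover's second message, the view is the partial trace over $\reg{P}$ of the state obtained by running $V^*$'s unitary and then the prover's controlled map on $\ket{\phi}_{\reg{V}\reg{M}\reg{P}}$ (the honest prover's post-first-message state) tensored with $V^*$'s workspace.

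The simulator $S_{V^*}$ for that last round mimics this without access to $\ket{\phi}$, $\reg{P}$, or $P_2$: it prepares (a purification of) $S(x,1)$ on $\reg{V}$ and $V^*$'s initial state on $\reg{Z}$, applies $V^*$'s unitary, and then performs Kobayashi's guess-and-check for the single verifier bit. Concretely, it samples a uniform guess bit $g$, reads off from $S(x,2)$ the two candidate honest-prover responses (the reduced states on $(\reg{V},\reg{M})$ for bit $0$ and bit $1$), and coherently routes the $g$-th candidate into $\reg{M}$, leaving the verifier's bit register $\reg{B}$ intact; it then tests whether $g$ agrees with $\reg{B}$, and on disagreement it uncomputes and retries via Watrous' rewinding. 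The two facts that make this work are: \textbf{(i)} the ``$g$ agrees with $\reg{B}$'' event has probability exactly $1/2$, completely independent of $x$, $w$, and $V^*$'s state --- this is precisely where it matters that the verifier's message is a single bit and $g$ is an independent uniform bit, and it is exactly the input-independence hypothesis of Watrous' rewinding lemma, so polynomially many rewinds produce the guess-conditioned state up to $\negl(n)$ error (and, since we are content with an unbounded simulator, efficiency of the rewinding loop is a non-issue); and \textbf{(ii)} conditioned on agreement the output is within $O(\wieps)$ of $\View_{V^*}(x,w,\cdot)$, the error being entirely the trace distance contributed by replacing the true first message and the two true responses (which come from $\ket{\phi}$ and $P_2$) by the simulated ones from $S$, each $\wieps$-close by the witness indistinguishability of $S$ and by monotonicity of trace distance under $V^*$'s unitary, the controlled map, and partial trace. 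Composing (i) and (ii) gives the desired simulator, and \cref{fact:mvWIZK} finishes the proof.

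The main obstacle is the coherent bookkeeping in the guess-and-check: a general malicious verifier may hand over its ``bit'' in superposition or entangled with $\reg{Z}$ and $\reg{V}$, so one has to route in the guessed response and test ``$g$ agrees with $\reg{B}$'' without collapsing $\reg{B}$, and one has to verify that the test's success probability is $1/2$ \emph{on the nose} (not merely close) so that Watrous' lemma applies in its exact form; one also has to thread the purifying ancillas for the mixed states $S(x,1)$ and $S(x,2)$ through the rewinding loop consistently so that the $O(\wieps)$ bound in (ii) genuinely survives. Everything else --- perfect completeness, soundness, and prover efficiency --- is unchanged from the input protocol, exactly as in \cref{lem:round-compr,lem:priv-to-pub}.
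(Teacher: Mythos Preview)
Your overall plan is right in spirit --- reduce via \cref{fact:mvWIZK} to constructing an unbounded simulator for each malicious verifier, and borrow Kobayashi's guess-and-check idea --- but the paper's route is simpler and sidesteps a gap in your construction.

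\textbf{The paper's simplification.} You invoke Watrous' quantum rewinding lemma to amplify the ``guess matches'' event. The paper observes that this is unnecessary: since \cref{fact:mvWIZK} allows an \emph{unbounded} simulator, one may simply \emph{post-select} on agreement rather than rewind. Concretely, the paper's simulator prepares a single purified copy of $S(x,2)$ in registers $(\reg{B}'_{\ms{hv}}, \reg{P}_1, \reg{P}_2)$ --- so the first prover message, the honest-verifier bit, and the second prover message are all present and correctly correlated --- feeds the first-message register $\reg{P}_1$ into the malicious verifier's unitary $V'_1$, XORs the resulting malicious bit against $\reg{B}'_{\ms{hv}}$, and post-selects on equality. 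The paper also notes that since the honest prover can always measure the received bit, one may assume without loss of generality that the malicious verifier's bit is classical; your coherent bookkeeping around a superposed $\reg{B}$ is therefore unnecessary. The upshot is that the simulator's output is a fixed transformation of $S(x,2)$, and replacing $S(x,2)$ by the real $\View_V(x,w,2)$ produces exactly $\View_{V^*}(x,w,2)$, so the error stays $\wieps$ with no additive $\negl(n)$ slack from rewinding.

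\textbf{The gap in your construction.} You prepare a purification of $S(x,1)$ on $\reg{V}$, apply $V^*$'s unitary, and only afterwards ``read off from $S(x,2)$'' the two candidate responses and ``route the $g$-th candidate into $\reg{M}$''. But the honest prover's second message is in general entangled with its first: both arise from the single pure state $\ket{\phi}$ on $(\reg{V},\reg{M},\reg{P})$. A response extracted from an \emph{independent} instance of $S(x,2)$ cannot be coherently attached to the $\reg{V}$ you already committed without either (a) losing that entanglement --- in which case claim (ii) fails --- or (b) implicitly acting on the purifying ancilla with the prover's map $P_2$, which is exactly what you say the simulator must avoid. The fix is precisely what the paper does: let a single call to $S(x,2)$ supply \emph{both} prover messages together (its $\reg{P}_1$ marginal already equals $S(x,1)$, as the paper points out), and then compare the simulated honest bit against the malicious bit. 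With that correction your rewinding argument would go through, but at that point post-selection is the cleaner route.
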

\begin{proof}
    The proof is nearly identical to \cite[Lemma 28]{Kob08}.

    Let $\calL = \{\calL_\text{yes}, \calL_\text{no}\}$ be an $\NP$ language with a $\hvQSWI\left(3,1-\ceps,\seps,\wieps\right)$ protocol between honest verifier $V$ and honest prover $P$ where the verifier's message consists of a single classical bit. Let $\reg{P}_1$ and $\reg{P}_2$ denote the first and second messages sent by $P$. Let $\reg{B}$ be the register containing the single classical bit that $V$ sends to $P$, and let $\reg{B}'$ be the local copy of $\reg{B}$ that the honest verifier holds on to. 
    
    By \cref{fact:WIZK}, there exists an unbounded simulator $S$ of the view of $V$ during the interaction between $V$ and $P$.
    By \cref{fact:mvWIZK}, we can show witness indistinguishability against malicious verifiers by simply constructing an unbounded simulator $S'$ of the view of an arbitrary malicious verifier $V'$. 
    Note that any unbounded simulator $S$ can be thought of as a unitary operation over a large Hilbert space.

    Without loss of generality, we can assume a malicious verifier's single bit message is classical, since the honest prover can measure register $\reg{B}$ in the standard basis to enforce this. 
    Because of this, we can also assume without loss of generality that the malicious verifier measures $\reg{B}$ in the standard basis before sending it and stores a local copy in a private classical register $\reg{B}'$, since any malicious verifier that doesn't do this can be simulated by one that does. 
    
    An arbitrary malicious verifier $V'$ might have additional auxiliary private space we will denote register $\reg{A}$. This auxiliary space could be initialized with helpful quantum information $\rho$.
    Let $V'_1$ be the malicious verifier's first transformation, on registers $(\reg{B}, \reg{P}_1, \reg{A})$. Note that, as said above, $V'$ will measure $\reg{B}$ and copy the classical outcome to a newly created register $\reg{B}'$ as the final step of the transformation $V'_1$, so we will say $V'_1$ outputs registers $(\reg{B}, \reg{B}', \reg{P}_1, \reg{A})$. 
    
    Our construction for $S'$ is as follows:
    \begin{itemize}
        \item Let $S'(x,0) = \rho$.
        \item Let $S'(x,1) = S(x,1) \otimes \rho$.
        \item Let $S'(x,2)$ be the contents of $(\reg{B}'_{\textsf{mv}}, \reg{P}_1, \reg{P}_2, \reg{A})$ after the following procedure:
        \begin{enumerate}
            \item Store $\rho$ in $\reg{A}$. Initialize all other registers used below to $\ket{0}$.

            \item Use $S$ to prepare $S(x,2)$ in registers $(\reg{B}'_{\textsf{hv}}, \reg{P}_1, \reg{P}_2)$. 
            
            Here, we envision the action of the simulator $S$ on inputs $x$ and $2$ as a unitary $U_{S(x,2)}$ over registers $(\reg{B}'_{\textsf{hv}}, \reg{P}_1, \reg{P}_2, \reg{S})$. We will keep the contents of $\reg{S}$ around so that we can apply $U_{S(x,2)}^{-1}$ in step 6 below. 
            
            Note that $\reg{P}_1$ is untouched by the honest verifier until the second message from the prover is received, so the contents of $\reg{P}_1$ will be the same here as they were in $S(x,1)$.

            \item Apply $V'_1$ to $(\reg{B}, \reg{P}_1, \reg{A})$ to get $(\reg{B}, \reg{B}'_{\textsf{mv}}, \reg{P}_1, \reg{A})$, the view of the malicious verifier just before it sends its message $\reg{B}$ to the prover.
            
            We know that the contents of registers $\reg{B}$ and $\reg{B}'_{\textsf{mv}}$ match, but for the simulated transcript we're building to be consistent, we need the bit in $\reg{B}'_{\textsf{hv}}$ that the honest prover used to produce $\reg{P}_1$ to match the bit $\reg{B}$ that our simulated malicious verifier plans to send to the prover.
            
            \item Write the $\XOR$ of $\reg{B}'_{\textsf{hv}}$ and $\reg{B}$ in a new register $\reg{X}$.
            
            \item Post-select on $\reg{X}$ containing $\ket{0}$. 
            
            Because our simulator can be unbounded, there are many ways to achieve this. 
            
            \item Output the contents of $(\reg{B}'_{\textsf{mv}}, \reg{P}_1, \reg{P}_2, \reg{A})$.
        \end{enumerate}
    \end{itemize}

    Since $\View_{V'}(x,j)$ can be constructed by following the same steps as the simulator but replacing each call to $S$ with a call to $\View_V$, we find that for all $j$,
    \[
    \trdist{\View_{V'}(x,j) - S'(x,j)} \le \wieps,
    \]
    so witness indistinguishability error is preserved.
\end{proof}

Finally, by combining \cref{lem:3round-negl-error,lem:priv-to-pub,lem:malicious-verifier}, we get the main theorem of this section:
\begin{theorem}\label{thm:hvQSWIeqQSWI}
    There exists a polynomial $\poly(\cdot,\cdot)$ such that for any $\wieps \in [0,1]$ and any polynomials $m\deq m(n)$ and $p\deq p(n)$ such that $\poly(m,p)\cdot\wieps < 1$, we have that 
    \[\hvQSWI\left(m,2/3,1/3,\wieps\right) \subseteq \pubQSWI\left(3,1-2^{-\Omega(p)},2^{-\Omega(p)},\poly(m,p)\cdot\wieps\right).\]
    Specifically, we have $\hvQSWI \subseteq \pubQSWI\left(3,1-2^{-p},2^{-p},\negl(n)\right) = \pubQSWI$.
\end{theorem}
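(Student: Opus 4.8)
The plan is to establish \cref{thm:hvQSWIeqQSWI} by composing the transformations \cref{lem:3round-negl-error}, \cref{lem:priv-to-pub}, \cref{lem:par-rep}, and \cref{lem:malicious-verifier} in sequence, propagating the completeness error, soundness error, and witness-indistinguishability error through each step. Fix polynomials $m = m(n)$ and $p = p(n)$ and a function $\wieps = \wieps(n)$; the argument will also fix an intermediate amplification parameter $q = q(n)$, chosen to be a sufficiently large constant multiple of $p$ at the very end.

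Step one: apply \cref{lem:3round-negl-error} with parameter $q$, turning any protocol in $\hvQSWI(m,2/3,1/3,\wieps)$ into a $3$-message honest-verifier QSWI protocol with completeness and soundness error both $2^{-\Omega(q)}$ and WI error $\poly_1(m,q)\cdot\wieps$ for some fixed polynomial $\poly_1$. Step two: apply \cref{lem:priv-to-pub}; its hypothesis $\seps < (1-\ceps)^2$ holds for all large enough $q$ since $2^{-\Omega(q)} < (1-2^{-\Omega(q)})^2$, and it yields a $3$-message \emph{public-coin} protocol in which the verifier sends a single uniform classical bit, with completeness error $2^{-\Omega(q)}$, soundness error $\tfrac12 + \tfrac12\sqrt{2^{-\Omega(q)}} = \tfrac12 + 2^{-\Omega(q)}$, and unchanged WI error. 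Step three: since the soundness is only bounded away from $\tfrac12$ by an exponentially small quantity, re-amplify with \cref{lem:par-rep} using $c = \Theta(p)$ parallel copies. Its hypothesis $1-\ceps > \seps$ holds for large $q$; the new soundness error is $\paren{\tfrac12 + 2^{-\Omega(q)}}^c \le (3/4)^c = 2^{-\Omega(c)} = 2^{-\Omega(p)}$, the new completeness error is at most $c \cdot 2^{-\Omega(q)} \le 2^{-\Omega(p)}$ once $q$ dominates $p + \log c$, and the WI error becomes $c\cdot\poly_1(m,q)\cdot\wieps$. Setting $q = \Theta(p)$ and absorbing constants, this is $\poly(m,p)\cdot\wieps$ for a fixed polynomial $\poly(\cdot,\cdot)$, and the hypothesis $\poly(m,p)\cdot\wieps<1$ is exactly what makes every intermediate bound nontrivial. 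The resulting protocol is still public-coin (the verifier sends $c$ uniform classical bits). Step four: apply \cref{lem:malicious-verifier} to conclude it is malicious-verifier QSWI, landing in $\pubQSWI\paren{3, 1-2^{-\Omega(p)}, 2^{-\Omega(p)}, \poly(m,p)\cdot\wieps}$, as claimed.

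For the ``specifically'' clause, instantiate $m = \poly(n)$ and $\wieps = \negl(n)$: then $\poly(m,p)\cdot\wieps = \negl(n)$, so $\hvQSWI \subseteq \pubQSWI\paren{3, 1-2^{-p}, 2^{-p}, \negl(n)}$, which is contained in $\pubQSWI$ by definition. The reverse containments $\pubQSWI \subseteq \QSWI \subseteq \hvQSWI$ are immediate from the definitions, so all three classes coincide.

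The step requiring the most care is step four together with the use of \cref{lem:par-rep}: \cref{lem:malicious-verifier} is stated for protocols in which the verifier's message is a \emph{single} classical bit, whereas after parallel repetition the verifier sends $c$ classical bits. The resolution is to observe that the simulator constructed in the proof of \cref{lem:malicious-verifier} never uses single-bitness — the $\XOR$ of $\reg{B}'_{\textsf{hv}}$ and $\reg{B}$ and the subsequent post-selection work verbatim for classical strings of any length, and an unbounded simulator can still perform the post-selection. (Alternatively, one could apply \cref{lem:malicious-verifier} to the single-bit protocol before amplifying and separately argue that parallel repetition preserves malicious-verifier QSWI, but extending \cref{lem:malicious-verifier} to string-valued verifier messages is cleaner.) The rest is parameter bookkeeping: checking that a single choice $q = \Theta(p)$ simultaneously satisfies the hypotheses of \cref{lem:priv-to-pub} and \cref{lem:par-rep} and drives both error bounds down to $2^{-\Omega(p)}$.
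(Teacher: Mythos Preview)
Your proposal is correct and follows essentially the same approach as the paper, which simply states that the theorem follows by combining \cref{lem:3round-negl-error}, \cref{lem:priv-to-pub}, and \cref{lem:malicious-verifier}. You are in fact more careful than the paper in two places that it leaves implicit: you explicitly insert a second application of \cref{lem:par-rep} after \cref{lem:priv-to-pub} degrades soundness to roughly $\tfrac12$, and you correctly flag and resolve the mismatch between the single-bit hypothesis of \cref{lem:malicious-verifier} and the $c$-bit verifier message produced by that repetition.
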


\section{Quantum Batch Proofs}\label{sec:batch-to-qswi}
The ultimate goal of this section is to prove that quantum batch proofs can be compiled into QSWI proofs.
\subsection{Definitions and Preliminaries}
We start by defining batch relations for $\NP$.
\begin{definition}
    For any $\NP$ relation $\calR$, we define the batch relation $\batch{\calR}{t}$ as follows:
    \[ \batch{\calR}{t} = \set{ \Big((x_1,\ldots,x_t), (w_1,\ldots,w_t)\Big) : \forall i \in [t], (x_i,w_i) \in \calR }. \]
\end{definition}
A quantum batch proof is a QIP for the relation $\batch{\calR}{t}$.
The compression parameter of the batch proof is roughly the total communication compared to the trivial protocol where the prover just sends all of the $\NP$ witnesses to the verifier. Formally:
\begin{definition}[Quantum Batch Proofs]
    Let $\calR$ be some $\NP$ relation and $t(n)$ be some polynomial.
    A QIP $(P,V)$ for the batch relation $\batch{\calR}{t}$ is called a quantum batch proof.
    Denote by $k\deq k(n,t)$ the number of rounds, and by $q_{M} \deq q_{M}(n,t)$ the number of message qubits.
    For any $\rho \in \N$ (which could be a function of $n,t$),
    we say that $(P,V)$ is $\rho$ compressing if $q_{M}(nt) \cdot k(nt) = o(\rho t)$.
\end{definition}

Let $(P,V)$ be a $k$-round QIP for some batch relation $\batch{\R}{t}$.
Note that for any $j \in [k]$, $\View_V(\cdot, \cdot, j)$ is a mapping of the instances and witness to the view of the verifier in round $j$.
For any $(x_1,\ldots,x_t) \in \batch{\R}{t}$ and each $i \in [t]$, let $w_{i}^{0}, w_{i}^{0}$ be two valid witness for $x_i$.
Denote $\xv = (x_1,\ldots,x_t)$, $\wv^0 = (w^0_1,\ldots,w^0_t)$, and $\wv^1 = (w^1_1,\ldots,w^1_t)$.
We define the mapping $f_j \deq f_{\xv,\wv^0,\wv^1,j}$ that inputs $t$ classical bits and outputs a view:
\[ f_j(b_1,\ldots,b_t) = \View_V\big((x_1,\ldots,x_t),(w^{b_1}_1,\ldots,w^{b_t}_t), j \big). \]

\paragraph{Game Theory.}
Consider a two-player zero-sum game $G = (R,C,p)$ where $R$ is the set of pure strategies for the first (``row'') player, $C$ is the set pure strategies for the second (``column'') player, and $p : R \times C \to [0,1]$ is the payoff function for the row player.
That is, the row player wants to maximize $p$, whereas the column players wants to minimize $p$.
Let $\rho$ and $\kappa$ denote mixed strategies for the row and column players respectively. That is, $\rho$ is a distribution over $R$ and $\kappa$ is a distribution over $C$.
We denote by $v(G)$ the value of the game, which is defined as
\[ v(G) = \max_{\sigma}\min_{\tau} \E_{
\begin{array}{cc}
     r \sim \sigma \\
     c \sim \tau 
\end{array}}
\big[p(r,c)\big] = \min_{\tau}\max_{\sigma} \E_{
\begin{array}{cc}
     r \sim \sigma \\
     c \sim \tau 
\end{array}}
\big[p(r,c)\big].
\]
\noindent
We use the following ``sparse'' minimax theorem due to Lipton and Yung \cite{LY94}:
\begin{lemma}[Sparse Minimax \cite{LY94}]\label{lem:sparseMM}
    For any $\eps > 0$, there exists a multiset $S \subseteq C$ of size $\Theta(\log \abs{R}/\eps^2)$ such that for all $r \in R$:
    \[ \E_{c \getsr S} \big[p(r,c)] \leq v(G) + \eps. \]
\end{lemma}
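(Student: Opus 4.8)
The plan is to derive this from the ordinary (von Neumann) minimax theorem together with a sampling argument and a union bound over the row player's pure strategies. First I would invoke the minimax theorem: since $v(G) = \min_{\tau}\max_{\sigma}\E[p(r,c)]$, there is an optimal mixed column strategy $\tau^\star$ (a distribution over $C$) with $\max_{r \in R}\E_{c \sim \tau^\star}[p(r,c)] \le v(G)$; in particular, $\E_{c \sim \tau^\star}[p(r,c)] \le v(G)$ for every fixed pure row strategy $r \in R$. The idea is now to ``sparsify'' $\tau^\star$ by replacing it with the empirical distribution of a few independent samples, and to argue that with positive probability the resulting sparse strategy is almost as good against \emph{all} rows simultaneously.

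Concretely, I would draw $k$ independent samples $c_1,\dots,c_k$ from $\tau^\star$ and let $S$ be the resulting multiset, with $k = \lceil \ln\abs{R}/\eps^2 \rceil$ (up to an absolute constant). Fix any $r \in R$: the variables $p(r,c_1),\dots,p(r,c_k)$ are i.i.d., lie in $[0,1]$, and have mean at most $v(G)$, so Hoeffding's inequality gives
\[ \Pr\!\left[\frac{1}{k}\sum_{i=1}^{k} p(r,c_i) > v(G) + \eps\right] \le \exp(-2k\eps^2). \]
Taking a union bound over all $r \in R$, the probability that some row $r$ satisfies $\E_{c \getsr S}[p(r,c)] > v(G)+\eps$ is at most $\abs{R}\exp(-2k\eps^2) < 1$ once $k > \ln\abs{R}/(2\eps^2)$. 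Hence a multiset $S$ of size $k = \Theta(\log\abs{R}/\eps^2)$ with the desired property exists.

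The points needing care are minor: (i) the minimax theorem must apply, which holds because $R$ and $C$ are finite (or more generally satisfy the hypotheses implicit in the game-theoretic setup); (ii) $S$ must be read as a multiset, since the sampling is with replacement, which is precisely what the statement permits; and (iii) to get the two-sided $\Theta(\cdot)$ bound one should note the sample size cannot be reduced below $\Omega(\log\abs{R}/\eps^2)$ in the worst case, though for the application only the upper bound $O(\log\abs{R}/\eps^2)$ is actually used. I do not anticipate a substantive obstacle here — this is the classic Lipton--Young derandomization-by-sampling argument, and the only slightly delicate bookkeeping is matching the constants in the Chernoff/Hoeffding bound against the union bound.
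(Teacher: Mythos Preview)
Your argument is correct and is precisely the standard Lipton--Young proof: take an optimal mixed column strategy guaranteed by von Neumann's minimax, sample $\Theta(\log\abs{R}/\eps^2)$ times independently from it, and apply Hoeffding plus a union bound over $R$. The paper does not actually prove this lemma---it is quoted as a black-box result from \cite{LY94}---so there is no ``paper's own proof'' to compare against, but what you wrote is exactly the argument one finds in the original source.
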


\subsection{Quantum Batching Implies QSWI}

\begin{theorem}\label{thm:batch-to-wi}
    Let $\calR$ be an $\NP$ relation and let $\rho < 1$.
    If $\calR$ has a $\rho$-compressing QBP with $m$ messages, completeness error $\ceps$, and soundness error $\seps$ then the corresponding language $\calL$
    is in $\hvQSWI\big(m,1-\ceps, \seps, O(\sqrt{\rho})\big)$, with a non-uniform honest prover.
\end{theorem}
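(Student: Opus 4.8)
The plan is to realize the strategy sketched in the technical overview, following \cite{BKPRV24} but using the \emph{quantum} distributional stability corollary \cref{lem:compress} in place of its classical analogue. Fix $x$ and two witnesses $w_0, w_1$ with $(x,w_0),(x,w_1) \in \calR$; the goal is to bound $\trdist{\View_V(x,w_0,j) - \View_V(x,w_1,j)}$ for the new QSWI protocol. The new prover, on input $(x,w)$, picks a random slot $j \getsr [t]$, plants $(x,w)$ there, and fills the other $t-1$ slots with instance-witness pairs drawn from some distribution $D$ over triples $(x_i, w_i^0, w_i^1)$ with both $(x_i,w_i^b)\in\calR$, choosing a uniformly random bit $b_i$ for each; it then runs the QBP prover $P$ on this batch and forwards the transcript registers together with $x_1,\dots,x_t$. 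The verifier checks that $x$ appears among the $x_i$ and then runs $V$. Completeness and soundness are immediate from those of the QBP together with the fact that $D$ always produces valid witnesses, and the message count is preserved. What remains is witness indistinguishability.

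For the WI analysis, first consider the idealized case where $D$ is the point distribution on the single triple $(x, w_0, w_1)$. Then the batch consists of $t$ copies of $x$, slot $j$ uses witness $w$ (i.e.\ $w_0$ or $w_1$ depending on which we are comparing), and every other slot $i$ uses $w_0$ or $w_1$ according to a uniform bit. Concatenating the random slot choice with the random bit vector, the verifier's view in round $\ell$ is exactly $f_\ell(b_1,\dots,b_t)$ for a uniformly random bit string, where $f_\ell = f_{x^t, w_0^t, w_1^t, \ell}$ in the notation of \cref{sec:batch-to-qswi}. Because the QBP is $\rho$-compressing, $f_\ell$ outputs $q_M k = o(\rho t)$ qubits on $t$ input bits, so \cref{lem:compress} gives $\E_{j\getsr[t]} \trdist{f_\ell(U^t|_{j\gets 0}) - f_\ell(U^t|_{j\gets 1})} = O(\sqrt{\rho})$. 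Since $\View_V(x,w_b,j)$ in the new protocol (with planted slot index $j$ and bit $b$) is $f_\ell(U^t|_{j\gets b})$ averaged over $j$, this is precisely the WI bound $O(\sqrt\rho)$, for every round $\ell$. One should also note the averaging over $j$ is harmless: the honest verifier never learns $j$, so the view is genuinely the $j$-average, which is what \cref{lem:compress} controls.

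The real obstacle, and the reason for the non-uniform prover, is that in the WI game the prover is handed only \emph{one} witness $w$, not the pair $(w_0,w_1)$, so it cannot sample the point distribution $D=(x,w_0,w_1)$ above — it does not know the ``other'' witness. Following \cite{BKPRV24}, the fix is to set up a zero-sum game whose row player picks an adversary/distinguisher strategy and whose column player picks the auxiliary distribution $D$ (supported on valid triples independent of $(w_0,w_1)$), with payoff equal to the distinguishing advantage between views using $w_0$ versus $w_1$; the argument above shows the game value is $O(\sqrt\rho)$. Applying the sparse minimax theorem \cref{lem:sparseMM} yields a \emph{fixed} multiset $S$ of $\Theta(\log|R|/\rho)$ good distributions that works against all row strategies simultaneously; hard-wiring $S$ (which depends on $x$ but not on the witness) into the prover as non-uniform advice lets it sample $D \getsr S$ and then proceed as in the idealized analysis, giving WI error $O(\sqrt\rho)$ against all (unbounded) distinguishers and hence against the honest verifier's view. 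Combining all of this places $\calL$ in $\hvQSWI(m, 1-\ceps, \seps, O(\sqrt\rho))$ with a non-uniform honest prover, as claimed. The main work is the careful setup of the minimax game so that (i) the column player's strategies are exactly distributions over \emph{valid} triples, (ii) the payoff is a genuine function of a pure row strategy and pure column strategy, and (iii) the value bound $O(\sqrt\rho)$ follows from \cref{lem:compress} applied to the mixed strategy that plants $(x, w_0, w_1)$; the rest is bookkeeping that mirrors \cref{lem:round-compr} and \cref{lem:par-rep}.
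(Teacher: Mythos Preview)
Your high-level approach---plant $(x,w)$ in a random slot of a batch, fill the remaining slots from a non-uniformly given distribution of valid triples, run the QBP, and invoke \cref{lem:compress} on the resulting $\rho$-compressing view map---is exactly the paper's approach, and your treatment of completeness, soundness, and the idealized WI analysis is fine.

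The gap is in the minimax step. You fix $(x,w_0,w_1)$ at the outset and let the row player be ``an adversary/distinguisher'', with the column player choosing the auxiliary distribution $D$. With that game, the sparse-minimax multiset $S$ produced by \cref{lem:sparseMM} depends on the game's payoff matrix, which in turn depends on $(w_0,w_1)$; this contradicts your own requirement that the advice be witness-independent. The paper instead makes the \emph{row player} choose the challenge triple $(x,w^0,w^1)$ (over all length-$n$ instances, not just the input), and the column player choose the $t$ auxiliary triples; then a single $S_n$ works against \emph{every} witness pair, and the advice depends only on $n$.

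Relatedly, your idealized analysis does not actually bound the game value once the row player ranges over all triples. Showing that for each pure $r=(x,w^0,w^1)$ the column response ``fill every slot with $r$'' gives payoff $O(\sqrt{\rho})$ only establishes $\max_r \min_c p(r,c)\le O(\sqrt{\rho})$, which is the wrong direction for sparse minimax: you need $\min_\tau \max_r \E_{c\sim\tau}[p(r,c)]\le O(\sqrt{\rho})$. The paper's key step (its \cref{prop:game-value}) is different: for an \emph{arbitrary} row mixed strategy $\sigma$, the column responds with $\tau=\sigma^{\otimes t}$, so the planted slot is distributed identically to the others; this is what removes the dependence of the view function $f_{r,c,i^*}$ on the slot index $i^*$ and makes \cref{lem:compress} applicable. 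Your plan is missing this distribution-matching argument; without it, the view function genuinely depends on $i^*$ (since the planted slot carries the challenge $x$ while the others carry auxiliary instances), and \cref{lem:compress} does not directly apply.
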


\begin{proof}
Let $(P,V)$ be the batch proof, and let $t=t(n)$ be the batch size.
Let $\calD = \set{\calD_n}_{n\in \N}$ be a distribution ensemble where $\calD_n$ is a distribution over the support:
\[ \set{(x,w_0,w_1) : \abs{x} = n \;\wedge\; (x,w_0) \in \calR \;\wedge\; (x,w_1) \in \calR } \]
The QSWI proof $(P', V')$ works as follows. 

\newprotocol{QSWI Prover Algorithm}
{an instance $x$ and a witness $w$.}
{
    \item Samples $i^* \getsr [t]$.
    \item For each $i \in [t] \setminus \set{i^*}$, samples $(x_i,w^0_i,w^i,_1) \sim \calD_n$ and a uniform bit $b_i \getsr \zo$.
    \item Sets $(x_{i^*}, w_{i^*}) = (x,w)$ and $w_i = w_{i}^{b_i}$ for all $i \in [t] \setminus \set{i^*}$. Denote $\xv = (x_1,\ldots,x_t), \wv = (w_1,\ldots,w_t)$.
    \item Sends $\xv$ and $j$ to the verifier.
    \item Afterwards, each circuit $P'_j(x,w)$ is simply defined as the circuit $P_j(\xv,\wv)$.
}{Algorithm $P'$}{fig:prover-qswi-from-batch}

\newprotocol{QSWI Verifier Algorithm}
{an instance $x$.}
{
    \item Receives $\xv$ and $j$ from the prover.
    \item Each circuit $V'_j(x,w)$ is simply defined as the circuit $V_j(\xv,\wv)$.
}{Algorithm $V'$}{fig:verifier-qswi-from-batch}

Completeness and soundness follow immediately from the completeness and soundness of $(P,V)$ and the fact that the prover samples ``yes'' instances with valid witnesses.

We move on to honest-verifier witness indistinguishability.
The goal is to find a distribution $\calD_n$ for which WI holds.
For any $j \in [k]$, we first define the following two-player zero-sum game $G=(R,C,p)$:
\begin{itemize}
    \item The row player chooses $(x,w^0,w^1)$ such that $\abs{x} = n$ and $(x,w^0),(x,w^1) \in \calR$.
    That is, the set of pure strategies is
    \[R = \set{(x,w^0,w^1) : \abs{x} = n \;\wedge\; (x,w^0) \in \calR \;\wedge\; (x,w^1) \in \calR}.\]
    
    \item The column player chooses $t$ tuples $(x_i,w^0_i,w^1_i)$ such that $\abs{x_i} = n$ and $(x_i,w^0_i),(x_i,w^1_i) \in \calR$ for all $i \in [t]$.
    That is, the set of pure strategies is
    \[C = \set{\big((x_i,w^0_i,w^1_i)\big)_{i\in[t]} : \forall i\in[t],\;\abs{x} = n \;\wedge\; (x_i,w^0_i)\in \calR \;\wedge\;(x_i,w^1_i) \in \calR}.\]

    \item Fix some $r=(x,w^0,w^1) \in R$ and $ \big((x_i,w^0_i,w^1_i)\big)_{i\in[t]} \in C$, and
    denote $\xv = (x_1\ldots,x_t)$
    , $\wv^0 = (w^0_1,\ldots,w^0_t)$ and $\wv^1 = (w^1_1,\ldots,w^1_t)$.
    For any $i^* \in [t]$, denote by $\xv|_{i^*\gets x}$
    the vector that we get by replacing coordinate $i^*$ in $\xv$ with $x$.
    Next, define the function $f_{r,c,i^*}(b) = \View_V(\xv|_{i^*\gets x}, \wv(b)|_{i^*\gets w^{b_{i^*}}}, j)$
    where $b = b_1\cdots b_t \in \zo^t$
    and $\wv(b) = (w_1^{b_1},\ldots,w_t^{b_t})$.
    The payoff of the game is defined as:
    \[ p\Big(r=(x,w^0,w^1), c=\big((x_i,w^0_i,w^1_i)\big)_{i\in[t]}\Big) = \E_{{i^*} \getsr [t]} \Big[ \trdist{f_{r,c,i^*}(U^t|_{i^* \gets 0}) - f_{r,c,i^*}(U^t|_{i^* \gets 1})} \Big].  \]
\end{itemize}

\begin{proposition}\label{prop:game-value}
    $v(G) = O(\sqrt{\rho})$.
\end{proposition}
\begin{proof}
    Let $\sigma$ be a distribution over $R$.
    We define the distribution $\tau \deq \sigma^{\otimes t}$, i.e., taking $t$ independent samples from $\sigma$.
    In this case, the expected payoff is
    \[\E_{\sigma,\tau} \left[\E_{{i^*} \getsr [t]} \Big[ \trdist{f_{\sigma,\tau,i^*}(U^t|_{i^* \gets 0}) - f_{\sigma,\tau,i^*}(U^t|_{i^* \gets 1})} \Big]\right].\]
    We would like to immediately use \cref{lem:compress}, but the compressing function $f_{\sigma,\tau,i^*}$ depends on the coordinate $i^*$ which we are fixing. Therefore, we have to remove this dependence first.

    \noindent
    By linearity of expectation, the previous expression is equal to
    \[ \E_{{i^*} \getsr [t]}\left[\E_{\sigma,\tau} \Big[ \trdist{f_{\sigma,\tau,i^*}(U^t|_{i^* \gets 0}) - f_{\sigma,\tau,i^*}(U^t|_{i^* \gets 1})} \Big]\right].\]
    By definition of $\sigma,\tau$, we have that $x$ and $x_{i^*}$ are identically distributed, therefore
    $\xv|_{i^* \gets x}$ is distributed identically to $\xv$.
    Similarly,
    we have that $w^\beta$ and $w^\beta_{i^*}$ are identically distributed for all $\beta \in \zo$,
    therefore
    $\wv(b)|_{i^* \gets w^{b_{i^*}}}$ is distributed identically to $\wv(b)$.
    Therefore
    we can replace $f_{\sigma,\tau,i^*}$ with the following function $f_{\tau}(b) = \View_V(\xv, \wv(b), j)$ (that does not replace coordinate $i^*$ with the sample $\sigma$) without changing the distribution nor the expected value.
    That is, the above is equal to
    \[ \E_{{i^*} \getsr [t]}\left[\E_{\tau} \Big[ \trdist{f_{\tau}(U^t|_{i^* \gets 0}) - f_{\tau}(U^t|_{i^* \gets 1})} \Big]\right].\]
    Again by linearity of expectation:
    \[ \E_{\tau}\left[ \E_{{i^*} \getsr [t]}\Big[ \trdist{f_{\tau}(U^t|_{i^* \gets 0}) - f_{\tau}(U^t|_{i^* \gets 1})} \Big]\right].\]
    Now by \cref{lem:compress}, we get that the above is upper bounded by $O(\sqrt{\rho})$ and \cref{prop:game-value} follows.
\end{proof}
Now by \cref{lem:sparseMM,prop:game-value}, for $\eps = \sqrt{\rho} > 0$ there exists a subset $S_n \subseteq C$ of size $O(\log\abs{R}/\rho) = \poly(n)$ such that for any $(x,w^0,w^1) \in R$,
\[ \E_{\big((x_i,w^0_i,w^1_i)\big) \getsr S} \left[ p\Big((x,w^0,w^1), \big((x_i,w^0_i,w^1_i)\big)_{i\in[t]}\Big)\right] \leq v(G) + \eps = O(\sqrt{\rho}). \]

This means that the non-uniform prover that samples the additional instances and witnesses from $S_n$ can guarantee value $O(\sqrt{\rho})$.
But this value is exactly the witness indistinguishability error and \cref{thm:batch-to-wi} follows.
\end{proof}

As a corollary of \cref{thm:hvQSWIeqQSWI,thm:batch-to-wi}, we get that any language that has a sufficiently compressing batch proof, has a \emph{malicious-verifier} QSWI proof:
\begin{corollary}
    There exists a polynomial $\poly(\cdot)$ and a negligible function $\negl(n)$ such that for any $\rho \in (0,1)$ and any polynomial $m\deq m(n)$ such that $\poly(m)\cdot\sqrt{\rho} < 1$,
    if $\calL \in \NP$ has a $\rho$-compressing QBP with $m$ messages then $\calL \in \pubQSWI\left(3,1-\negl(n),\negl(n),\poly(m)\cdot\sqrt{\rho}\right)$, with a non-uniform honest prover.
\end{corollary}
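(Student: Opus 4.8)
The plan is to obtain the corollary by chaining \cref{thm:batch-to-wi} with \cref{thm:hvQSWIeqQSWI} and bookkeeping the parameters. Given a $\rho$-compressing QBP for $\calL$ with $m$ messages, I would first normalize its errors: without loss of generality we may assume its completeness error is at most $1/3$ and its soundness error at most $1/3$, since a constant number of parallel repetitions (with a threshold decision) amplifies the completeness--soundness gap while keeping the message count fixed and blowing up the message complexity, hence the compression parameter, by only a constant factor; so it remains an $O(\rho)$-compressing QBP. Applying \cref{thm:batch-to-wi} to this normalized QBP then places $\calL$ in $\hvQSWI(m,1-\ceps,\seps,O(\sqrt{\rho}))$ with $\ceps,\seps\le 1/3$ and a non-uniform honest prover; by monotonicity of the class in its completeness and soundness parameters (a protocol with smaller completeness/soundness error also meets the weaker bounds), this is contained in $\hvQSWI(m,2/3,1/3,O(\sqrt{\rho}))$.

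Next I would feed this membership into \cref{thm:hvQSWIeqQSWI} with $\wieps = O(\sqrt{\rho})$ and with $p$ instantiated to a fixed polynomial (say $p(n)=n$), so that $2^{-\Omega(p)}$ is negligible. Its hypothesis $\poly(m,p)\cdot\wieps<1$ then reads $\poly(m,n)\cdot O(\sqrt{\rho})<1$, which we fold into a single polynomial $\poly(\cdot)$ by absorbing the fixed $p$-polynomial and the hidden constant, matching the corollary's hypothesis $\poly(m)\cdot\sqrt{\rho}<1$; and its conclusion yields
\[
  \calL \in \pubQSWI\big(3,\,1-2^{-\Omega(n)},\,2^{-\Omega(n)},\,\poly(m)\cdot\sqrt{\rho}\big).
\]
Setting $\negl(n)=2^{-\Omega(n)}$ gives exactly the claimed membership.

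The closest thing to an obstacle is verifying that non-uniformity of the honest prover survives the compilation inside \cref{thm:hvQSWIeqQSWI}: I would observe that each ingredient (\cref{lem:round-compr,lem:par-rep,lem:priv-to-pub,lem:malicious-verifier}) constructs its new honest prover by running the old honest prover -- and the old honest verifier -- as a black box, so a non-uniform honest prover is simply carried through and no step imposes a uniformity requirement. The other point to confirm is that the constant-error regime $\hvQSWI(m,2/3,1/3,\cdot)$ demanded by \cref{thm:hvQSWIeqQSWI} is actually reached, which is precisely the normalization step at the start. With these two checks in place, the corollary follows immediately by composition.
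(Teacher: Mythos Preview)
Your proposal is correct and takes essentially the same approach as the paper, which simply states the corollary as an immediate consequence of \cref{thm:batch-to-wi} and \cref{thm:hvQSWIeqQSWI} without spelling out any of the bookkeeping you provide. Your added care about normalizing the QBP's errors before invoking \cref{thm:batch-to-wi}, and about verifying that the non-uniform honest prover is preserved through each step of \cref{thm:hvQSWIeqQSWI}, fills in details the paper leaves implicit.
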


\section{Open Problems} 
\begin{itemize}
    \item Does every $\NP$ problem have QSWI proofs? \cref{thm:batch-to-wi} shows that constructing quantum batch proofs for all of $\NP$ would imply $\NP$ has QSWI proofs with inverse polynomial error. 
    
    See \cref{appendix:npbatchattempt} for our attempt to design quantum batch proofs for $\NP$.
    Unfortunately, our protocol is insecure against a cheating prover keeping a private register entangled with the communication qubits.
    
    \item Can every QSWI proof be made to have perfect completeness? Solving the following toy problem would imply yes:

    \begin{problem}
        Construct an efficient quantum circuit that uses polynomially many copies of $\sqrt{p}\ket{0} + \sqrt{1-p}\ket{1}$ to exactly produce the state $\sqrt{1-\frac{c}{p}} \ket{0} + \sqrt{\frac{c}{p}} \ket{1}$ for some known efficiently computable constant $c$ and unknown $p$.
    \end{problem}
\end{itemize}

\section*{Acknowledgments}
The authors would like to thank Scott Aaronson for teaching the graduate course that inspired this work and for guidance and encouragement throughout this project.
We would also like to thank Jeff Champion, Joshua Cook and William Kretschmer
for helpful conversations.
Shafik Nassar is supported by NSF awards CNS-2140975 and CNS-2318701.
Ronak Ramachandran is supported by an NSF Graduate Research Fellowship. This work was done in part while Ronak Ramachandran was interning at Sandia National Laboratories.

\bibliographystyle{alpha}
\bibliography{main}

\appendix
\section{Failed Attempt at Quantum Batch Proofs for $\NP$}
\label{appendix:npbatchattempt}

By \cref{thm:batch-to-wi}, if every problem in $\NP$ has $\rho$-compressing quantum batch proofs, then $\NP$ is in $\QSWI$ with $O(\sqrt{\rho})$ WI error. 
More concretely, if an efficient quantum verifier possessing $k$ instances $(x_1,\dots, x_k)$ of any arbitrary $\NP$ problem can verify that all $k$ instances are yes-instances through a $\rho k$-qubit interaction with an efficient quantum prover (who potentially possesses $k$ valid witnesses $(w_1,\dots, w_k)$), then every $\NP$ problem would have a QSWI protocol with witness indistinguishability error $O(\sqrt{\rho})$. 

What follows is an attempt to design quantum batch proofs for $\NP$ by having the verifier search for instances with no valid witness using a distributed Grover's search algorithm. Unfortunately, the protocol is not secure against dishonest provers that keep a private register entangled with the messages sent to the verifier.
If the protocol had worked, it would have been $\frac{1}{\sqrt{k}}$-compressing, resulting in $k^{-1/4}$ WI error.
We note that the protocol does not put any restriction on $k$, so we can reduce the WI error by making $k$ larger (and thus paying in the total communication).

\subsection{The Distributed Grover Search Protocol}

Consider an arbitrary language $\calL$ in $\NP$. Such a language must have a polynomial-time classical verifier $A$ such that for any yes-instance $x$, $A$ accepts $(x,w)$ for some witness $w$, whereas for any no-instance $x$, $A$ rejects $(x,w)$ for all witnesses $w$. Let $\text{REJ}(x,w)$ be $1$ if $A$ rejects $(x,w)$ and $0$ otherwise.

As usual, both the prover and verifier have access to $k$ instances $(x_1,\dots, x_k)$ of $\calL$ and the honest prover additionally has valid witnesses for each instance, $(w_1,\dots, w_k)$. This protocol will use $3$ quantum registers: the communicated index, the witness, and the verifier index.
The first two registers are public communication registers that the verifier and prover take turns acting on, while the verifier index register is kept private by the verifier.
We use an explicit $\otimes$ to distinguish the public communication registers from the verifier's private register.
For an honest prover, the protocol works as follows:

\figbox{Distributed Grover Search Subroutine}
{
First, the verifier chooses $b \in \{0,1\}$ and $j \in [k]$ uniformly at random.

The verifier initializes $\ket{\phi_0} = \sum_{i \in [k]} \ket{i}\ket{0} \otimes \alpha_{0,i} \ket{i}$ where $\alpha_{0,i} = \frac{1}{\sqrt{k}}$.

For $t = 0,\ldots,T$:

\begin{enumerate}
    \item The prover transforms the state to $\sum_{i \in [k]} \ket{i}\ket{w_i} \otimes \alpha_{t,i}\ket{i}$.
    
    \item To ensure the index registers match, the verifier performs CNOT from the verifier index to the communicated index, 
    rejects if the communicated index is not $\ket{0}$, 
    then performs the same CNOT again.
    
    The verifier transforms the state to
    $\sum_{i \in [k]} \ket{i}\ket{w_i} \otimes (-1)^{f_i}\alpha_{t,i}\ket{i}$, 
    where $f_i = \text{REJ}(x_i,w_i)$ if $b = 0$, and $f_i = \delta_{j,i}$ otherwise.
    
    \item The prover uncomputes the witness register,
    leaving the verifier with the state $\sum_{i \in [k]} \ket{i}\ket{0} \otimes (-1)^{f_i}\alpha_{t,i}\ket{i}$.
    
    \item The verifier rejects if anything other than $\ket{0}$ is observed in the witness register.
    
    The verifier performs CNOT from the verifier index to the communicated index, 
    rejects if the communicated index is not $\ket{0}$,
    performs the Grover diffusion operator on the verifier index,
    and finally performs the same CNOT again
    to get a new state 
    $\ket{\phi_{t+1}} = \sum_{i \in [k]} \ket{i}\ket{0} \otimes \alpha_{t+1,i} \ket{i}$.
\end{enumerate}

Finally, the verifier measures the verifier index register.
Let $i^*$ be the result.

If $b=0$, then the verifier asks the prover for $w_{i^*}$ and accepts if $A$ accepts $(x_{i^*}, w_{i^*})$, rejecting otherwise.

If $b=1$, the verifier accepts if $i^* = j$, rejecting otherwise.
}{The main subroutine of our failed attempt to quantumly batch all of $\NP$}{fig:batching-np}

One can easily verify every step above can be accomplished by an efficient quantum circuit. 

Step 2 can be seen as the verifier implementing a phase oracle, marking all invalid instances in the $b=0$ case and marking nothing but $j$ in the $b=1$ case. 

One run of the above protocol with $T=\floor{\frac{\pi}{4}\sqrt{k}}$ iterations would suffice if we knew there was only a single marked element. Whenever we don't know the number of marked elements, we can do $O(\log k)$ runs, exponentially increasing the number of iterations as $T = 0,1,2,4,8, \dots, \floor{\frac{\pi}{4}\sqrt{k}}$, until a marked element is found (perhaps randomly interleaving iterations to prevent a dishonest prover from knowing which run any given iteration is from). This still uses only a $O(\frac{1}{\sqrt{k}})$ fraction of the communication that would have been required for the prover to transmit all $k$ witnesses.

Note that in the case of an honest prover, we know the number of marked elements is $0$ in the $b=0$ case and $1$ in the $b=1$ case. We could modify our Grover's search subroutine to give our protocol perfect completeness using techniques from \cite{hoyer00} that make Grover's search succeed at finding marked elements with certainty when the number of marked elements is known.

\subsection{An Attack}

The protocol described above is successful against multiple dishonest prover strategies. Ensuring the index registers match in steps 2 and 4 forces the prover to provide some witness for every instance, ensuring all no-instances will be marked by the verifier in step 2 when $b=0$. Any prover strategy that would result in a negligible probability of the verifier finding an invalid witness in the $b=0$ case (for instance, by marking or unmarking indices at step 3) would result in at least a constant probability of the verifier not finding $j$ in the $b=1$ case.

However, we were unable to prevent the following attack:
Assume that there is just one bad instance at index $i^*$. The prover first initializes a private quantum register to $\ket{0}$. This private prover register must be big enough to store the total number of rounds of interaction with the verifier.
At every iteration of step $3$, controlled on the communicated index register containing $\ket{i^*}$, the prover writes the number of rounds of interaction that have elapsed so far into this new private register.
In this way, every time some amplitude accumulates on $\ket{i^*}$ within the branch of the quantum state where the prover's new private register contains $\ket{0}$, we isolate that amplitude within a brand new branch of the superposition, preventing all future interference. There will be at most $O(\sqrt{k})$ new branches, each with at most $\frac{4}{k}$ probability of being observed, resulting in only a $O(\frac{1}{\sqrt{k}})$ chance of the verifier finding $i^*$ in the $b=0$ case. At the same time, this modification only reduces the probability of observing $j$ in the $b=1$ case by $O(\frac{1}{\sqrt{k}})$.
This results in soundness error that is non-negligibly far from $1$, which is usually good enough since we can amplify soundness by repetition.
However, because we care about the total communication, we cannot afford to amplify in this case.
Specifically, to reduce the soundness error to, say, $1/2$, we would need $O(\sqrt{k})$ repetitions.
This loses all succinctness and brings us back to the trivial $O(k)$ communication, which is not compressing at all.

\end{document}